\newtheorem{thm}{Theorem}
 \theoremstyle{definition}
\newtheorem{definizione}{Definition}
\DeclarePairedDelimiter{\norma}{\lVert}{\rVert}
\renewcommand{\emptyset}{\varnothing}
\definecolor{myred}{rgb}{0.9,0,0}
\newcommand{\rosso}[1]{\textcolor{myred}{#1}}
\newcommand{\comm}[1]{\rosso{({\em #1})}}
\newcommand{\wuno}{{{{\bf z}}_{1}}}
\newcommand{\wdue}{{{{\bf z}}_{2}}}
\newcommand{\wi}{{{{\bf z}}_{i}}}
\newcommand{\vett}[1]{\boldsymbol{#1}}
\newcommand{\N}{\mathbb{N}}
\newcommand{\Z}{\mathbb{Z}}
\newcommand{\R}{\mathbb{R}}
\newcommand{\E}{\mathbb{E}}
\newcommand{\eps}{\epsilon}
\newcommand{\EKP}{{\rm E}_{\rm KP}}
\newcommand{\Eloop}{{\rm E}_{\rm loop}}
\newcommand{\hau}{\mathscr{H}}
\newcommand{\U}{U_{M,n_i,\eta_i}}
\title{Soap film spanning an elastic link}
\author[1]{Giulia Bevilacqua}
\author[2]{Luca Lussardi}
\author[3]{Alfredo Marzocchi}
\affil[1]{\small{MOX - Dipartimento di Matematica, Politecnico di Milano, Italy}}
\affil[2]{\small{DISMA- Dipartimento di Scienze Matematiche ``Giuseppe Luigi Lagrange'', Politecnico di Torino, Italy}}
\affil[3]{\small{Dipartimento di Matematica e Fisica ``Niccolò Tartaglia'', Università Cattolica del Sacro Cuore}}
\date{}
\begin{document}
\maketitle
\section{Abstract}

We study the equilibrium problem of a system consisting by several Kirchhoff rods linked in an arbitrary way and tied by a soap film, using techniques of the Calculus of Variations. We prove the existence of a solution with minimum energy, which may be quite irregular, and perform experiments confirming the kind of surface predicted by the model.

\section{Introduction}
In this article we find the solution of the Kirchhoff-Plateau problem which is physically motivated by soap-films that span flexible loops.
This approach is a generalization of the so-called {\em Plateau problem}, a centuries-old mathematical problem investigated by the Belgian physicist Joseph Plateau \cite{plateau}. In contrast with Plateau problem, in which a soap film spans a fixed frame, the Kirchhoff-Plateau problem concerns the equilibrium shapes of a system in which a \emph{flexible} filament has the form of a closed loop spanned by a liquid film. We model our filament as a Kirchhoff rod: it has to be thin enough, unshearable and inextensible and it can sustain bending of its midline and twisting of its cross-sections (see for instance Antman \cite{antman} Ch.~8). In this way the problem becomes ``elasto-variational''.

This kind of problem has been investigated in \cite{ggg} by Fried {\em et al.} where they consider only a filament, while our aim is to study more complex configurations of the bounding loop, like a finite number of them linked in an arbitrary way. For the sake of simplicity we will consider throughout the paper two thin elastic three-dimensional closed rods, i.e.\ two loops, linked in a simple but nontrivial way: we impose that the midline of each rod has to have linking number equal to one with the other one: this implies that they form what is called a link (see Fig. \ref{sistema}), but the case of a number of $N$ loops possibly non isotopic to a torus and arbitrarily linked can be easily treated with the same technique and minor changes. In this way, the major difference with respect to \cite{ggg} is the fact that the second loop doesn't have a fixed position in space, while the first has a prescribed frame at a point. 
\begin{figure} [h]
\centering
\includegraphics[width=0.5\textwidth]{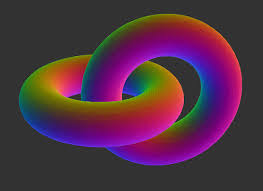}
\caption{Geometry of the problem}
\label{sistema}
\end{figure}
To take into account all of these requests we have to impose some physically motivated constraints, such as local and global non-interpenetration of matter (though allowing for points on the surface of the bounding loop to come into contact), and other already introduced by Schuricht \cite{S}, adding the necessary specifications in considering a link and not a single loop.

As for the energy functional of the system, we consider three contributions: the elastic and the potential energy for the link and the surface tension energy of the film. Precisely, we do not take into account the energy associated with the liquid/solid interface since it is less then the one between the liquid/air interface. However, for the future it could be a parameter to add to the problem in order to give a more physical description also for the spanning surface.

The most delicate point in our elasto-variational problem is a good definition of spanning surface, since we do not prescribe {\em a priori} the region where
 the soap film touches the surface of the bounding loop.\\
To overcome this problem we use the definition of {\em spanning surface} introduced by Harrison \cite{Ha} based on the concept of {\em linking number}, which is a numerical invariant well-known in topology. Even if this approach describes all soap-film solutions (\cite{Ha}, \cite{H}), it needs much strong regularity. Therefore, like in \cite{ggg}, we use a recent and powerful reformulation by DeLellis {\em et al.}\ \cite{delellis}, who formulate the Plateau problem in a particular notion of ``bounding'' and make use of Hausdorff topology for the convergence of surfaces. In this way the minimum of the problem is defined as the support of a Radon measure, \cite{david}, \cite{depauw}, \cite{reifenberg}.
Hence, this approach has the advantage of considering also non-rectifiable or not fixed boundaries but it is not easy to apply since the minimization of Hausdorff measures on classes of compact sets could cause lack of lower semicontinuity (it depends on the notion of convergence adopted), which is fundamental for the direct method of the Calculus of Variations. To avoid these difficulties we combine Preiss rectifiability theorem \cite{preiss} for Radon measure in combination with some variational arguments, such as the introduction of cone and cup competitors.

Finally, we tried to reproduce the physical counterpart problem in the laboratory in order to ``see'' experimentally the solution. The results are interesting because, as we expected, there could be a balance between the film energy and the weight in cases when the loops are very light.

\section{Formulation of the problem}

%In this section we introduce the two components of the system.

%\subsection{The bounding loop}

We consider two continuous bodies whose reference, or material, configurations are two right cylinders of lengths $L_1,L_2$. 
The arc-length parameter $s$ of the axis of each cylinder identifies a material (cross) section $\mathcal A(s)$, which consists of all points on a plane perpendicular to the axis at $s$ belonging to a simply connected and compact subset of the plane.
Like in (\cite{antman}, Ch.~8) we describe each rod by three vector-valued functions $[0,L_i] \rightarrow\R^3$ given by $s\mapsto (\vett r_i(s),\vett{u}_i(s),\vett{v}_i(s))$ ($i=1,2$).

Now we fix a point $O$ in the euclidean space $\E^3$ and describe the position in space of each point of the $i$th rod. Setting $G_i(s)-O=\vett r_i(s)$ (the so-called {\em midline}), where $G_i(s)$ is the center of mass of the cross-sections and considering $\vett u_i$ and $\vett v_i$ as applied vectors in $G_i(s)$, a generic point $P_i$ of the rod in space is given by the knowledge of the vector
\begin{equation}
\label{pscelta}
\vett p_i(s, \zeta_1,\zeta_2) = P_i-O= \vett r_i(s) + \zeta_1\vett u_i(s) + \zeta_1\vett v_i(s),
\end{equation}
where $(s,\zeta_1,\zeta_2)\in\Omega_i := \{ (s,\zeta_1,\zeta_2) | \, s \in [0,L_i], \, (\zeta_1,\zeta_2) \in \mathscr{A}_i(s)\}$. Hence, $\Omega_i$ is the closure of an open set in $\R^3$. Moreover, $\zeta_1$ and $\zeta_2$ are not completely free: we require that our body is {\em ``longer than broad''}, so there exists an $R>0$, the maximum thickness, which has to be small compared to the length $L_i$, such that $|\zeta_1| < R$ and $|\zeta_2| < R$ for any $(s, \zeta_1,\zeta_2)$.\\
Moreover, we also assume that the rod is {\em unshearable}, i.e.\ the cross section at any point of the midline remains in the plane orthogonal to the midline at that point, so that $\vett u$ and $\vett v$ are orthogonal to the midline, and that this line is {\em inextensible}. Hence, by these assumptions, we can choose the Kirchhoff rod as a model for the first rod, which is a special case of a Cosserat rod.\\

Given the function $\mathcal A_i(s)$, the position of the midline of each rod is then completely determined by three scalar parameters with a physical meaning: $k'_i$ and $k''_i$ are the {\em flexural densities} and $\omega$ the {\em twist density}. The vectors $\vett r_i,\vett u_i,\vett v_i$ satisfy the system of Ordinary Differential Equations
\begin{equation}
\label{s}
\left \{
\begin{aligned}
\dot{\vett r}_i(s) &= \vett w_i(s), \\
\dot{\vett u}_i(s) &= -\omega(s)\vett w_i(s) - k'_i(s)\vett v_i,\\
\dot{\vett v}_i(s) &= k'_i(s)\vett u_i(s) + k''_i(s)\vett w_i(s);
\end{aligned}
\right.
\end{equation}
where $i=1,2$ and $\vett w=\vett u\times\vett v$ is tangent to the midline.

Up to now, the two rods are defined in the same way. We now suppose that the first one is ``clamped'' by assigning an initial value to its system, i.e. 
\begin{equation}
\label{eq:terza}
(\vett r_1(0),\vett u_1(0),\vett v_1(0))=(\hat{\vett r}_1,\hat{\vett u}_1,\hat{\vett v}_1).
\end{equation}
Since clearly 
$$\dot{\vett w}_1(s)=-\omega_1(s)\vett u_1(s)-k''_1(s)\vett v_1(s)$$
the triple $(\vett u_1,\vett v_1,\vett w_1)$ satisfies a non-autonomous linear system and therefore, then by classical results \cite{hartman}, if the densites $k'_1,k''_1$ and $\omega$ belong to $L^p ([0,L_1];\R)$ for some $p \in (1,\infty)$, then the initial-value problem has a unique solution, with $\vett r_1 \in W^{2,p}([0,L_1];\R^3)$ and $\vett u_1,\vett v_1 \in W^{1,p}([0,L_1];\R^3)$.

It is easy to verify that if $(\hat{\vett u}_1,\hat{\vett v}_1,\hat{\vett w}_1)$ is orthonormal, so is $(\vett u_1(s),\vett v_1(s),\vett w_1(s))$ for every $s \in [0,L_1]$. For every $(\hat{\vett u}_1,\hat{\vett v}_1,\hat{\vett w}_1)\in(\R^3)^3$ we then set
$$\wuno = (k'_1,k''_1,\omega_1)\in{V}_1 := L^p ([0,{L}_1];\R^3).$$
 
As for the second rod, since we do not know {\em a priori} its position in space, we need some information also on the orientation of one of its orthonormal frames. Therefore we seek a solution of the form
$$
\wdue = (k'_2,k''_2,\omega_2,\hat{\vett r}_2,\hat{\vett u}_2,\hat{\vett v}_2,\hat{\vett w}_2)\in{V}_2 := L^p ([0,{L}_2];\R^3) \times \R^3 \times \R^3 \times \R^3 \times \R^3
$$
where $\hat{\vett u}_2,\hat{\vett v}_2,\hat{\vett w}_2$ are orthonormal and $\hat{\vett r}_2$ gives their application point. 

Now the system \eqref{s}${}_{2,3}$ and \eqref{eq:terza}, together with the knowledge of $\hat{\vett r}_2$, fully fixes the position in space of the second midline.

Since we want to deal with closed loops, we have to restrict to a suitable subclass of descriptors by imposing topological constraints. Obviously we impose the closure of the midlines, {\em i.e.} 
\begin{equation}
\label{midline}
\vett r_i(0)=\vett r_i(L_i)\qquad (i=1,2)
\end{equation}
and, since we do not want interpenetration, we need to have also continuity of the tangent vectors, so that for $i=1,2$ 
\begin{equation}
\label{directors}
\vett w_i(0)=\vett w_i(L_i).
\end{equation}

The simple determination of the midline, however, does not completely fix the shape of the loops if they are three-dimensional. Indeed, the same midline may correspond to different bodies if the cross-sections ${\mathcal A}_i(s)$ are rotated around the midline before being glued, and the final rotation angle depends on the shape of the cross-section. On the other hand, since they are undeformable, the information to be encoded reduces to fixing a point in every section. First of all we recall the notion of {\em isotopy}, which will be useful also later on.

\begin{definizione}
Let ${\eta}_{i} : [a,b] \to \R^3$, with $i=1,2$, be two continuous curves with ${\eta}_{i}(a)= {\eta}_{i}(b)$. ${\eta}_1$ and ${\eta}_2$ are said to be {\em isotopic}, ${\eta}_1 \simeq {\eta}_2$, if there are open neighborhoods $N_1$ of ${\eta}_1([a,b])$, $N_2$ of ${\eta}_2([a,b])$ and a continuous mapping $\Phi : N_1 \times [0,1] \mapsto \R^3$ such that $\Phi(N_1,\tau)$ is homeomorphic to $N_1$ for all $\tau \in [0,1]$, $\Phi (\cdot, 0)$ is the identity, $\Phi (N_1,1)= N_2$ and $\Phi({\eta}_1([a,b]),1) = {\eta}_2([a,b])$.
\end{definizione}

The isotopy class is then stable with respect to diffeomorphism and define also the {\em knot type}. Another very useful notion is the linking number.

\begin{definizione}
\label{linking}
Let $\eta_1,\eta_2$ be two absolutely continuous disjoint closed curves in $\E^3$. The number
$$
L(\eta_1,\eta_2)=\frac{1}{4\pi}\int_a^b\int_a^b \frac{\eta_1(s)-\eta_2(t)}{|\eta_1(s)-\eta_2(t)|^3}\cdot(\eta_{1}'(s)\times\eta_{2}'(t))\,ds\,dt
$$
is called the {\em linking number} between $\eta_1$ and $\eta_2$.
\end{definizione}

It is well-known \cite{munkres} that $L$ is always an integer and that is invariant in the isotopy class of the two curves.

To encode a possible rotation of the cross-sections, we then proceed as follows for each of the two rods. Since the thickness is nonzero, we can consider a curve ``near'' the midline ${\vett r}_i$, which could be not a closed one since the endpoints may be different. Joining them without interesecting the midline, we obtain a closed curve which has a certain linking number with the midline. Of course, every possible midline has to preserve this number, so we will impose this constraint on the midline. At this point, once we know the midline, the position of the nearby curve is fixed and so is its every cross-section, thus completely defining\footnote{Up to a set of $L^1$-zero measure which is irrelevant.} the shape of the loops, which we will indicate by $\Lambda[{\bf z}]$, see Fig \ref{loop}.

Finally, we want to impose that the two loops form a link. We then come back to the two midlines, and we suppose that they are linked with a given linking number $L_{12}\in\Z$. As they are closed sets, they admit disjoint neigbourhoods, which we can suppose tubular without loss of generality (\cite{mukherjee} pp 199-223). By a further shrinking to the diameter of ${\mathcal A}(s)$ we have that both rods are disjoint and linked one each other with the given linking number.

At this point, the shape of the two solids is assigned once we know $\wuno,\wdue$, but we still have to avoid local and global interpen\-etration, which is clearly unphysical. To this end, we first introduce the elastic and potential energy stored in the loops.

The elastic energy is supposed to be of the classical form (see for instance \cite{dacorogna}, Ch.~2)
\begin{equation}
\label{esh2}
{\rm E}_{\rm el_{\tiny i}}[\wi] := \int_{0}^{L_i} f_i(\wi(s),s)\, ds
\end{equation}
where $f_i(\cdot, s)$ are continuous and convex for any $s \in [0,L_i]$ and $f_i(a,\cdot)$ is measurable for any $a \in \R^3$. Since we are going to apply the Direct Method of the Calculus of Variations, we suppose that there exist positive constants $C_i, D_i$ such that
\begin{equation}
f_i(a,s)\geq C_i|a|^p+D_i \qquad \forall (a,s) \in \R^3 \times [0,L_i].\\
\label{eq:effe_i}
\end{equation}
In view of this, the total elastic energy
$$
{\rm E}_{\rm el}[{\bf z}] = {\rm E}_{\rm el_{\tiny 1}}[\wuno] +  {\rm E}_{\rm el_{\tiny 2}}[\wdue] := \int_{I} f({\bf z}(\xi),\xi)\, d\xi,
$$
where $I=[0,L_1]\times [0,L_2]$, ${\bf z} = (\wuno,\wdue)$ and $\xi$ is a vector variable, is easily seen to be coercive on $V:= V_1 \times V_2$.

As for the potential energy of the weight, it is given for each loop by
$$
{\rm E}_{{\rm g}_{\tiny i}}[\wi]=-\int_{0}^{L_i}\rho_i(s)\,\vett g\cdot (G_i(s)-O)\, ds
$$
where $\rho_i>0$ stand for the mass of each section of the rod and $\vett g$ denotes the acceleration of gravity.\\

It is worth insisting on the fact that the weight plays a different role in the two rods: in the first it acts essentially deforming only the midline, while in the second it influences the global positioning of the rod, and could draw it away without appropriate conditions of non intersection, that we will introduce below. 

We also set 
$$\Eloop[{\bf z}]={\rm E}_{\rm el_{\tiny 1}}[{\wuno}] + {\rm E}_{{\rm g}_{\tiny 1}}[{\wuno}]+{\rm E}_{\rm el_{\tiny 2}}[{\wdue}] + {\rm E}_{{\rm g}_{\tiny 2}}[{\wdue}].$$

We now need to set sufficient conditions for the local and global non-interpen\-etration of our configuration. 

As for the first, it is well-known (\cite{antman}, Theorem 6.2, p.276) that for Kirchhoff rods the condition is equivalent to the existence of two convex, homogeneous functions $g_i(\xi_1,\xi_2,s)$ such that $g(0,0,s)=0$ and
\begin{equation}
\begin{aligned}
\label{preserve2}
g_i(k'_i(s),k''_i(s),s)<1  \quad &\hbox{for a.e. } s \in [0,L_i],\qquad (i=1,2).\\
\end{aligned}
\end{equation}

However, this will not define a weakly closed set in the space of solutions, due to the strict inequality. Therefore, we will require the weaker condition
\begin{equation}
\begin{aligned}
\label{preserve3}
g_i(k'_i(s),k''_i(s),s)\leq 1  \quad &\hbox{for a.e. } s \in [0,L_i],\qquad (i=1,2).\\
\end{aligned}
\end{equation}
even if this could let some point to infinite compression, and to prevent this we impose the natural growth condition on the elastic energy as
\begin{equation}
\label{fi}
\begin{aligned}
f_i({\wi}(s),s) \to +\infty \qquad \hbox{as} \qquad g_i(k'_i(s),k''_i(s),s) \to 1, \, (i=1,2)\\
\end{aligned}
\end{equation}
i.e. the elastic energy approaches infinity under complete compression (remember thay $f_i$ may depend on $g_i$). By this assumption we have that the equality in \eqref{preserve2} can occur only on a set of measure zero for configurations with finite energy.

At this point it is not difficult to prove the

\begin{thm}
\label{thm1}
Let ${\bf z}=(\wuno, \wdue) \in V = {V}_1 \times {V}_2$ satifies \eqref{preserve3}, $f_i$ with $i=1,2$ satisfies \eqref{fi} and ${\rm E}_{\rm el}({\bf z}) < +\infty$.\\
Then the mapping $(s, \zeta_1,\zeta_2) \mapsto \vett{p}[{\bf z}](s,\zeta_1,\zeta_2) = (\vett{p}_1,\vett{p}_2)[\wuno, \wdue](s,\zeta_1,\zeta_2)$ is locally injective on ${\rm int} \, \Omega$. Moreover, this mapping is open on ${\rm int} \, \Omega$.
\end{thm}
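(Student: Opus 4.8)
The plan is to reduce both assertions to the a.e.\ positivity of the Jacobian of each component map $\vett p_i$, treating the two rods separately since they have disjoint reference domains $\Omega_1,\Omega_2$. First I would upgrade the weak constraint to a strict one: hypothesis \eqref{preserve3} gives $g_i(k'_i,k''_i,s)\le 1$ a.e., and the blow-up \eqref{fi} together with ${\rm E}_{\rm el}({\bf z})<+\infty$ forces the set $\{s:g_i=1\}$ to be Lebesgue-null, exactly as remarked just before the statement. Hence $g_i(k'_i(s),k''_i(s),s)<1$ for a.e.\ $s$.

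Next I would differentiate $\vett p_i(s,\zeta_1,\zeta_2)=\vett r_i(s)+\zeta_1\vett u_i(s)+\zeta_2\vett v_i(s)$. Using \eqref{s} and the orthonormality of $(\vett u_i,\vett v_i,\vett w_i)$ one gets $\partial_{\zeta_1}\vett p_i=\vett u_i$, $\partial_{\zeta_2}\vett p_i=\vett v_i$, while $\partial_s\vett p_i$ has $\vett w_i$-component $1-\zeta_1\omega+\zeta_2 k''_i$ and the rest tangential, so the Jacobian is the affine function
\[
J_i(s,\zeta_1,\zeta_2)=\partial_s\vett p_i\cdot(\vett u_i\times\vett v_i)=1-\zeta_1\omega(s)+\zeta_2 k''_i(s),
\]
with $J_i(s,0,0)=1$. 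By the characterization underlying \eqref{preserve2}--\eqref{preserve3} (Antman, Thm.~6.2), the strict inequality $g_i<1$ a.e.\ is precisely the statement that $J_i(s,\cdot,\cdot)>0$ on the whole section $\mathscr A_i(s)$ for a.e.\ $s$, i.e.\ the orientation-preserving condition.

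The core step, and the one where $L^p$ regularity blocks the inverse function theorem, is local injectivity. I would fix $\bar P=(\bar s,\bar\zeta_1,\bar\zeta_2)\in{\rm int}\,\Omega_i$ with $(\bar\zeta_1,\bar\zeta_2)\in{\rm int}\,\mathscr A_i(\bar s)$, set $\bar y=\vett p_i(\bar P)$, and study the signed distance of $\bar y$ from the moving sectional plane, $\varphi(t):=\vett w_i(t)\cdot(\bar y-\vett r_i(t))$. This $\varphi$ is absolutely continuous, and a short computation from \eqref{s} shows that its a.e.\ derivative equals $-J_i$ evaluated at the orthogonal projection $(\alpha(t),\beta(t))$ of $\bar y$ onto the section at $t$. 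Choosing $(\bar\zeta_1,\bar\zeta_2)$ interior keeps $(\alpha(t),\beta(t))$ inside $\mathscr A_i(t)$ for $t$ near $\bar s$, so $\dot\varphi<0$ a.e.\ there; absolute continuity then upgrades ``$\dot\varphi<0$ a.e.'' to genuine strict monotonicity, whence $\varphi$ has the single zero $t=\bar s$. Any preimage $(s',\zeta')$ of $\bar y$ satisfies $\varphi(s')=0$, forcing $s'=\bar s$, and injectivity of the affine map $\zeta\mapsto\vett r_i+\zeta_1\vett u_i+\zeta_2\vett v_i$ on one section gives $\zeta'=\bar\zeta$. Openness is then free: $\vett p_i$ is continuous (indeed $\vett r_i\in W^{2,p}\hookrightarrow C^1$ and $\vett u_i,\vett v_i\in W^{1,p}\hookrightarrow C^0$) and locally injective from the open set ${\rm int}\,\Omega_i\subset\R^3$ into $\R^3$, so Brouwer's invariance of domain makes it a local homeomorphism, in particular open.

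I expect the main obstacle to be the local injectivity step: since the densities are only $L^p$, the map $\vett p_i$ is not $C^1$ and $J_i$ exists only a.e., ruling out the inverse function theorem. The monotone signed-distance device circumvents this, the two delicate points being that absolute continuity is what converts an a.e.\ sign of $\dot\varphi$ into strict monotonicity, and that the interiority of $(\bar\zeta_1,\bar\zeta_2)$ in the cross-section is exactly what allows the a.e.\ positivity of $J_i$ to be invoked along the entire projected curve $t\mapsto(\alpha(t),\beta(t))$.
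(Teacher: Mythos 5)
Your proof is correct and follows essentially the same route as the paper: the paper's own proof is a one-line deferral to the argument of \cite{ggg} applied to each rod separately, and your reconstruction (strict orientation preservation a.e.\ via \eqref{fi} and finite energy, the signed-distance monotonicity device for local injectivity, and invariance of domain for openness) is precisely that argument. The only discrepancy is your Jacobian $1-\zeta_1\omega+\zeta_2 k''_i$ versus the form $1-\zeta_1 k'_i-\zeta_2 k''_i$ appearing in \eqref{gi1}, which traces back to the paper's own sign/labelling inconsistencies in \eqref{s} rather than to any error in your computation.
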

\begin{proof}
The proof can be made easily by following the proof presented in \cite{ggg} taking into account the fact that we can study the two rods separately since it is sufficient to reduce the proof in an open neighbourhood well-cointained in each rod. 
\end{proof}

As for the global injectivity, we must distinguish each loop and their union. First of all, Ciarlet and Ne\v{c}as \cite{ciarlet} proved that if this condition holds \eqref{gi12}
\begin{equation}
\label{gi12}
\int_{\Omega_i} {\rm det} \, \frac{\partial \vett{p}_i (s,\zeta_1,\zeta_2)}{\partial(s, \zeta_1, \zeta_2)} \, \mbox{d} (s,\zeta_1, \zeta_2) \leq \mathscr{L}^3(\vett{p}_i[{\bf z}_i](\Omega_i)),
\end{equation}
the global injectivity is true. Moreover, in our case it can be rewritten as
\begin{equation}
\label{gi1}
\int_{\Omega_i} (1- \zeta_1k'_i(s) - \zeta_2k''_i(s)) \, \mbox{d}(s,\zeta_1,\zeta_2) \leq \mathscr{L}^3(\vett{p}_i[\wi](\Omega_i)). %
\end{equation}
Hence, assuming \eqref{gi1} true, one has the global injectivity of the functions $\vett{p}_i$ on each rod. Roughly speaking, this condition guarantees that parts of the rod which are far away from each other in the reference configuration, cannot penetrate each other after large deformations.\\
We will then suppose \eqref{gi1} for the non-interpenetration of each rod. At this point, for the union of the two, we notice that the midlines (which are closed sets) have to be disjoint and therefore there exist $R>0$ such that the maximum diameter of the sections is less than $R$ it holds 
\begin{equation}
\label{noint}
\forall \, {\bf z} \in V \quad \vett{p}_1[\wuno](\Omega_1) \cap \vett{p}_2[\wdue](\Omega_2) = \emptyset.
\end{equation}

We will then suppose the sections so small that \eqref{noint} is verified.

Now we can prove the
\begin{thm}
\label{2th}
Let ${\bf z}$ be an element of $V = {V}_1 \times {V}_2$ such that $\Eloop[\bf w] < +\infty$ and $f_i$ with $i=1,2$ satisify \eqref{fi}. Suppose that $\vett{p}_1[{\bf z}](\Omega_1) \cap \vett{p}_2[{\bf z}](\Omega_2) = \emptyset$ and $\wi$ satisfies \eqref{preserve3} and \eqref{gi1}.\\
Then the mapping $(s,\zeta_1,\zeta_2) \mapsto \vett{p}[{\bf z}](s,\zeta_1,\zeta_2)$ is globally injective on ${\rm int} \, \Omega$.
\end{thm}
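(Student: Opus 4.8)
The plan is to split the injectivity of the full map $\vett{p}[{\bf z}]=(\vett{p}_1,\vett{p}_2)$ into a single-rod statement, inherited from \eqref{gi1}, and a cross-rod statement, handled by the disjointness hypothesis. Since the two rods carry separate reference bodies, ${\rm int}\,\Omega = {\rm int}\,\Omega_1 \sqcup {\rm int}\,\Omega_2$, so any coincidence $\vett{p}[{\bf z}](x)=\vett{p}[{\bf z}](x')$ with $x,x'\in{\rm int}\,\Omega$ falls into one of two cases. If $x,x'$ lie in the same rod ${\rm int}\,\Omega_i$, then $\vett{p}_i[{\bf z}](x)=\vett{p}_i[{\bf z}](x')$ and the single-rod global injectivity (ensured by \eqref{gi1}) gives $x=x'$. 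If instead $x\in{\rm int}\,\Omega_1$ and $x'\in{\rm int}\,\Omega_2$, then the common value would belong to $\vett{p}_1[{\bf z}](\Omega_1)\cap\vett{p}_2[{\bf z}](\Omega_2)$, which is empty by hypothesis, so this case is vacuous. Hence $\vett{p}[{\bf z}]$ is injective on ${\rm int}\,\Omega$, and the whole difficulty is concentrated in the single-rod claim.

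The substantive point, which I would import from \cite{ggg} or reprove, is that \eqref{gi1} delivers \emph{pointwise} injectivity of each $\vett{p}_i$ on ${\rm int}\,\Omega_i$, not merely injectivity off a null set. The Ciarlet--Ne\v{c}as inequality \eqref{gi12}, combined with the area formula, only yields that the multiplicity of $\vett{p}_i$ equals one for almost every image point. To invoke the area formula one must first know that the Jacobian $\det \partial\vett{p}_i/\partial(s,\zeta_1,\zeta_2) = 1-\zeta_1 k'_i - \zeta_2 k''_i$ is positive almost everywhere: this follows from \eqref{preserve3}, which bounds the flexural densities through the gauge $g_i$, together with the smallness of the section radius $R$, while the growth condition \eqref{fi} with $\Eloop[{\bf z}]<+\infty$ confines the degenerate locus $g_i=1$ to a null set. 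I expect the passage from almost-everywhere injectivity to genuine injectivity to be the real obstacle.

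For that upgrade I would lean on Theorem \ref{thm1}. Each $\vett{p}_i$ is continuous --- the one-dimensional embedding $W^{1,p}\hookrightarrow C^0$ makes the frame $(\vett u_i,\vett v_i,\vett w_i)$ and the midline continuous --- and by Theorem \ref{thm1} it is both locally injective and open on ${\rm int}\,\Omega_i$. Suppose, for contradiction, that $\vett{p}_i(x)=\vett{p}_i(x')$ with $x\neq x'$ interior. Choosing disjoint neighborhoods $U,U'$ of $x,x'$ on which $\vett{p}_i$ is injective, openness makes $\vett{p}_i(U)\cap\vett{p}_i(U')$ a nonempty open set, hence of positive Lebesgue measure; its two preimages in $U$ and $U'$ are disjoint sets of positive measure on which $\vett{p}_i$ takes coincident values. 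Selecting in each a representative avoiding the null exceptional set of the a.e.\ injectivity then produces two distinct points with the same image, a contradiction. This secures the single-rod statement, after which the assembly carried out in the first paragraph is immediate.
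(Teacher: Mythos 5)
Your proposal is correct and follows essentially the same route as the paper: reduce to a single rod via the disjointness hypothesis \eqref{noint}, obtain ${\rm card}\{\vett{p}_i^{-1}(\vett{q})\}=1$ for a.e.\ $\vett{q}$ from the Ciarlet--Ne\v{c}as condition \eqref{gi1} together with the area formula (on the complement of the null slice set where \eqref{fi} permits degeneracy), and then upgrade to genuine injectivity on ${\rm int}\,\Omega_i$ using the local injectivity and openness of Theorem \ref{thm1}. The only difference is one of exposition: the paper compresses the final upgrade into the phrase ``combined with Theorem \ref{thm1}'', whereas you spell out the open-sets contradiction explicitly, which is a faithful filling-in of that step rather than a different argument.
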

\begin{proof}
By \eqref{noint} it suffices to show the global injectivity of $\vett{p}_1$ and repeat the arguments for $\vett p_2$.
Let us fix a configuration $\wuno \in {V}_1$, by Theorem \ref{thm1} there is a set $\mathcal{I}_0$ of measure zero such that
$$
\limsup_{(\tilde{s},\tilde{\zeta_1},\tilde{\zeta_2}) \to (s, \zeta_1, \zeta_2)} \frac{\| \vett{p}_1(\tilde{s},\tilde{\zeta_1},\tilde{\zeta_2}) - \vett{p}_1(s, \zeta_1, \zeta_2) \|}{\| (\tilde{s},\tilde{\zeta_1},\tilde{\zeta_2}) - (s, \zeta_1, \zeta_2)\|} < \infty \quad \forall (s, \zeta_1, \zeta_2) \in \Omega'_1,
$$
where $\Omega'_1$ is defined as $\Omega_1 \setminus \Omega_1(\mathcal{I}_0)$ and
$$
\Omega_1(\mathcal{I}_0) = \{ (s, \zeta_1, \zeta_2) \in \Omega_1 : s \in \mathcal{I}_0\}.
$$
Obviously\footnote{The complete proof of this statement is in \cite{S}. However, we can give a simple and empirical idea of the proof: since $\mathcal{I}_0$ has measure zero, the cross-sections such that their arc-length parameter $s$ belongs to $\mathcal{I}_0$ are the elements of the set $\Omega_1(\mathcal{I}_0)$. Hence, $\mathscr{L}^3(\Omega_1(\mathcal{I}_0))=0$. But now, since $\vett{p}_1$  is a regular function, we can think that {\em sets of measure zero are mapped into sets of measure zero}.} $\mathscr{L}^3(\vett{p}_1[\wuno](\Omega_1(\mathcal{I}_0)))$ is equal to zero. By the coarea formula\footnote{Let consider two locally Lipschitz continuous functions $f:\R^n\to\R$ and $g:\R^n\to\R^m$ with $m \leq n$. Then
$$
\int_{\R^n}f(x) \sqrt{{\rm det} \, [{\rm d}g(x)({\rm d}g(x))^{T}]} \, {\rm d}\mathscr{L}^n(x) = \int_{\R^m} \left( \int_{g^{-1}(y)} f(x) \, {\rm d}\hau^{n-m}(x) \right) \rm{d}\mathscr{L}^m(y).
$$
If $m=n$, the quantity $\hau^{0} = {\rm card}$, i.e. the function which counts the elements of a set \cite{federer}.} (\cite{federer}, pp 243-244), we have
\begin{equation}
\label{federer}
\int_{\Omega'_1}(1-\zeta_1{k}'_1 - \zeta_2{k}''_1) \, {\rm d} (s,\zeta_1,\zeta_2) = \int_{\vett{p}_1(\Omega'_1)} {\rm card} \{ \vett{p}^{-1}_1(\vett{q})\} \, {\rm d} \vett{q},
\end{equation}
where $\vett{p}^{-1}_1$ is the inverse of the mapping $\vett{p}_1$. Therefore, using \eqref{gi1} and \eqref{federer}, it yields
\begin{equation*}
\begin{split}
\mathscr{L}^3(\vett{p}_1[\wuno](\Omega_1)) = \int_{\vett{p}_1[\wuno]\Omega_1)} {\rm d} \vett{q} = \int_{\vett{p}_1[\wuno](\Omega_1(\mathcal{I}_0))} {\rm d} \vett{q} \leq \\
\int_{\vett{p}_1[\wuno](\Omega_1(\mathcal{I}_0))} {\rm card} \{ \vett{p}^{-1}_1(\vett{q})\} \, {\rm d} \vett{q} = \int_{\Omega'_1}(1-\zeta_1{k}'_1 - \zeta_2{k}''_1) \, {\rm d} (s,\zeta_1,\zeta_2) = \\
\int_{\Omega_1}(1-\zeta_1{k}'_1 - \zeta_2{k}''_1) \, {\rm d} (s,\zeta_1,\zeta_2) \leq \mathscr{L}^3(\vett{p}_1[\wuno](\Omega_1)).
\end{split}
\end{equation*}
Hence,
\begin{equation}
\label{card}
{\rm card} \{ \vett{p}^{-1}_1(\vett{q})\} = 1 \qquad \hbox{ for almost all } \vett{q} \in \vett{p}_1[\wuno](\Omega_1),
\end{equation}
which combined with Theorem \ref{thm1} ensure the injectivity of $\vett{p}_1$ on ${\rm int} \, \Omega_1$ and then the global injectivity of $\vett{p}$ on ${\rm int}\, \Omega$. 
\end{proof}

Finally, the energy stored in a film that will deform the link is defined as 
\begin{equation}
\label{measure}
{\rm E}_{{\rm film}}(S) = 2\sigma \hau^{2}(S),
\end{equation}
where $\hau^{d}$ represents the $d$-dimensional Hausdorff measure.
When a soap film is in stable equilibrium, as in eq. \eqref{measure}, any small change in its area, $S$, will produce a corresponding change in its energy ${\rm E}$, providing $\sigma$ remains constant.
As ${\rm E}_{{\rm film}}$ is minimized when the film is in stable equilibrium, $S$ will be minimized. Precisely, in \eqref{measure}, we do not consider what happens between the film and the bounding loop, i.e.\ the energy associated with the liquid/solid interface.  

Anyway, we still cannot provide the final expression for the energy since we have not yet specified how the film is attached to each loop.
Since in our case we have a boundary with non vanishing thickness, to formulate the idea of a solution we have to give a good definition of the terms {\em surface}, {\em area} and {\em contact}, which we will call {\em span}. We need also a precise mathematical formulation of the conditions which explain how the liquid film spans the bounding loop without detaching from it, after which we will end up with the final expression of the functional to be minimized. We begin with some recalls of topology.
%This is the case of Harrison's paper \cite{H} in which the surface $S$ with minimal area spans a given smoothly embedded curve $\gamma$, i.e.\ a candidate surface for the Plateau's problem, has to satisfy a condition which is equivalent to say that the support of the boundary of $S$ is $\gamma$.\\
%However, this definition is not good in our case: the boundary is a filament of a finite cross-sectional thickness and a non-vanishing volume, i.e.\ a $3D$-object.\\

\begin{definizione}
\label{link}
Let $H=\bigcup_{j\in J}H_j$ be a closed compact $3$-dimensional submanifold of $\E^3$ consisting of connected components $H_j$. We say that a circle $\gamma$ embedded in $\E^3 \setminus H$ is a {\em simple link} of $H$ if there exists $i \in J$ such that the linking numbers $L(\gamma, H_j)$ verify 
$$|L(\gamma, H_i)|=1,\qquad L(\gamma, H_j)=0\quad j\neq i.$$
\end{definizione}

Clearly, a simple link ``winds around'' only one component of $H$ (see figure \ref{loop}). Precisely, the definition of the linking number between a closed subset and a curve is exactly the one given before (Definition \ref{linking}) by considering the compactification of the $\E^3$ (for more details see \cite{rolfsen}, pp.132-136).

\begin{definizione}
We say that a compact subset $K \subseteq \E^3$ {\em spans} $H$ if every simple link of $H$ intersects $K$.
\end{definizione}

This idea is crucial: we need spanning sets (in simple cases, surfaces) crossing every simple link: in this way it is impossibile for $K$ to be ``detached'' from $H$, or having ``holes'' which are not occupied by other components of $H$ (see figure \ref{loop}).
%\begin{figure} [h]
%\centering
%\includegraphics[width=0.5\textwidth]{aaa}
%\caption{Let us consider two closed curve in space which allow to pass through themselves but not each other. Namely, if we consider the midlines of these two tori and we impose that each curve has exactly one point in common with the internal part of the other one, i.e. $\vett{r}_1 \, \cap \,{\rm int} \{ \vett{r}_2\} = P$, we are describing two curves with linking number equal to one.}
%\label{film}
%\end{figure}
However, in our case we need a still more general definition, because in our problem $H$ is not given a priori since $H = \Lambda[{\bf z}]$, i.e.\ it depends on the considered configuration.

Now let $H$ be an arbitrary closed subset of $\E^3$ and consider the family
$$
C_{H} = \{ \gamma : {\mathbb{S}}^1 \to \E^3 \setminus H : \gamma \hbox{ is a smooth embedding of } {\mathbb{S}}^1 {\hbox{ into }} \E^3 \}.
$$
A set $C \subseteq C_{H}$ is said to be {\em closed by homotopy} (with respect to $H$) if it contains all elements belonging to the same homotopy class. %$[\gamma] \in \pi_{1}(\R^3 \setminus H)$ for any $\gamma \in C$.\\

\begin{definizione}
\label{surface}
Given $C \subseteq C_H$ closed by homotopy, we say that a relatively closed subset $K \subset \E^3 \setminus H$ is a {\em C-spanning} set of $H$ if
$$K \cap \gamma \neq \emptyset \quad \forall \gamma \in C.$$
We denote by $F(H,C)$ the family of all $C$-spanning sets of $H$.
\end{definizione}
%This definition comes from the precise and accurate work made up by De Lellis \& co.\ \cite{delellis}, who were interested in a more physical Plateau's problem, as us. 
Notice that the set spanned by the surface, can be any closed set in $\E^3$, so we can consider $H = \Lambda[{\bf z}]$ with finite cross-section, as in our case and not only a line as in the Plateau's problem. 
%However, since our problem is a little spcific, we have to be more precise. Indeed, we can allow, by the constraint of the {\em non-interpenetration} of matter we are going to introduce later, that the two loops can touch themselves but they cannot become two distinct ones or change their midline. SO, we seek for a particular spanning set: the surface which covers both the tori.\\
Nevertheless, the spanning surface depends only on the choice of the homotopy class and not to the configuration ${\bf z}$.
Hence, we can define
\begin{definizione}
\label{dspanning}
We call a set $D_{\Lambda[{\bf z}]} \subseteq C_{\Lambda[{\bf z}]}$ a {\em $D_{\Lambda[{\bf z}]}$-spanning} set of $\Lambda[{\bf z}]$ if it contains all the smooth embeddings $\gamma$ which are not homotopic to a constant and which have linking number one with both rods. For the sake of brevity, we will write $D$ in place of $D_{\Lambda[{\bf z}]}$.
\end{definizione}
Finally, we denote $F(\Lambda[{\bf z}], D)$ the family of $D$-spanning sets of $\Lambda[{\bf z}]$ with linking number one with both  components (see Fig \ref{loop}).
\begin{figure}[h!]
\centering
\includegraphics[width=0.65\textwidth]{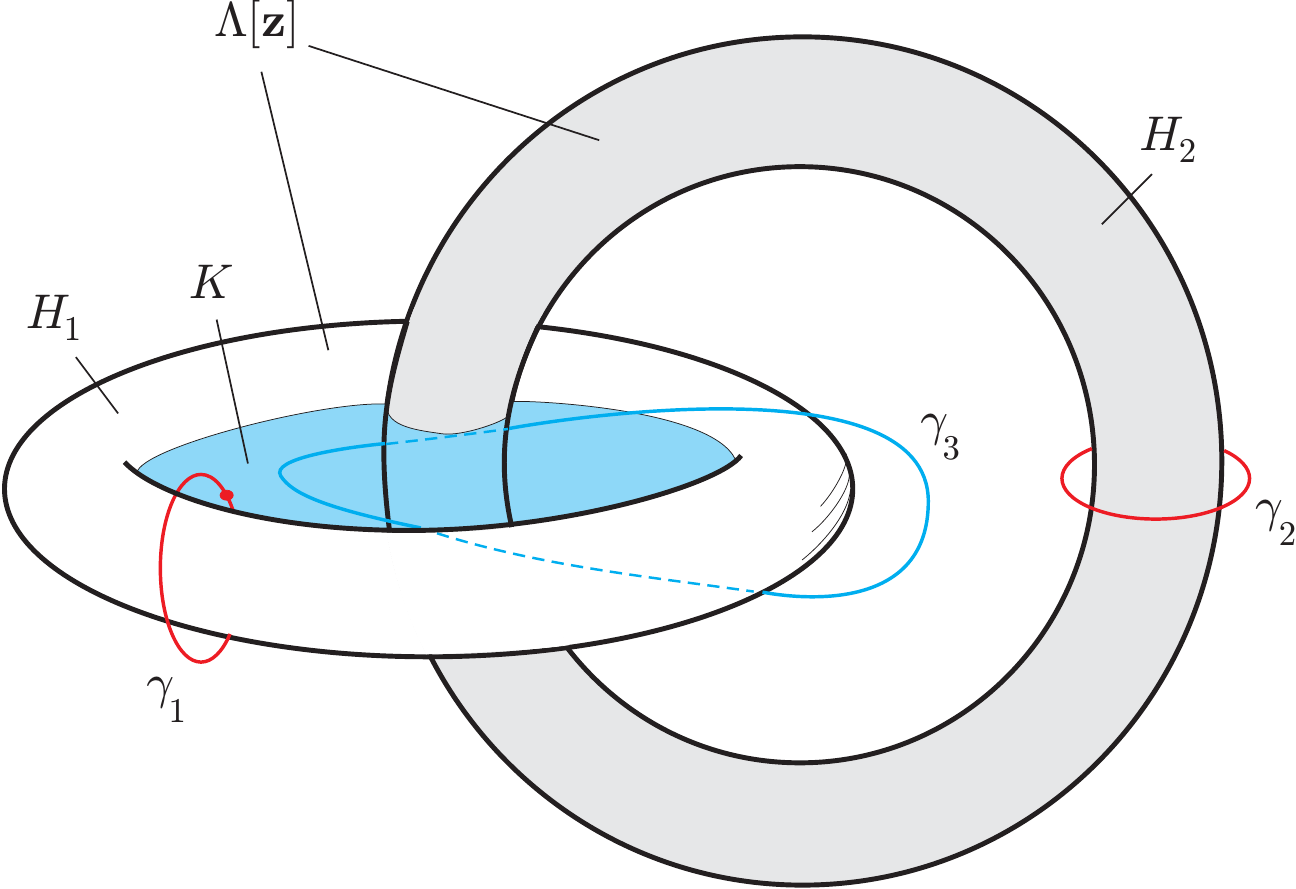}
\caption{$\gamma_i$ $(i=1,2)$ are simple links for $H_i$ while $\gamma_3\in F(\Lambda[\bf w])$. Even if $K$ is not $D$-spanning for the whole system, notice how $\gamma_1\cap K\neq\emptyset$.}
\label{loop}
\end{figure}

%Up to now, we gave the mathematical description of our system and we presented a mathematical model for the spanning surface. Now we look for a minimizer ${\bf z}$ for the functional

We are now in position to set the energy functional for our problem. We set
\begin{equation}
\label{EKP}
\EKP [{\bf z}]:= \Eloop[{\bf z}] + \inf \{ {\rm E}_{\rm film}(S) : S \, \hbox{is a } D\hbox{-spanning set of } \Lambda[{{\bf z}}]\},
\end{equation}
where ${\bf z} \in V$ and verifies all the above-mentioned constraints. Precisely, the $\inf$ in the equation \eqref{EKP} is necessary since we want to eliminate the dependence on the spanning surface $S$ and writing everything in the terms of the configuration ${\bf z}$ only.

At this point a first important result holds
\begin{thm}
\label{Umneta}
Let two circumferences $\eta_i : [0, L_i] \to \E^3$ and $M \in \R$ and $n_1, \, n_2 \in \Z$ three constants be given. Then, the set
\begin{equation}
\begin{aligned}
\U:=\{{\bf z} = (\wuno, \wdue, \hat{{\vett r}}_2, \hat{{\vett u}}_2,\hat{{\vett v}}_2) &\in V = V_1 \times V_2 : \Eloop[{\bf z}] < M;\\
&\eqref{midline}, \eqref{directors}, \eqref{gi1} \mbox{ and } \eqref{noint} \mbox{ hold }L({\bf \wi}) = n_i; \\
& L_{12} = 1 \mbox{ and } ({\vett r}_1[\wuno], {\vett r}_2[\wdue]) \simeq (\eta_1, \eta_2)\}
\end{aligned}
\end{equation}
is weakly closed in $V$.\\
\end{thm}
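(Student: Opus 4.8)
The plan is to establish weak sequential closedness of $\U$, that is, to show that whenever ${\bf z}_n\rightharpoonup{\bf z}$ in $V$ with ${\bf z}_n\in\U$ one has ${\bf z}\in\U$. The energy bound $\Eloop<M$ together with the coercivity \eqref{eq:effe_i}, the fixed lengths $L_i$ and the link condition $L_{12}=1$ confine $\U$ to a bounded subset of the reflexive space $V$, on which the weak topology is metrizable; hence weak sequential closedness is equivalent to the asserted weak closedness. Along such a sequence the flexural and twist densities $(k'_{i,n},k''_{i,n},\omega_{i,n})$ converge weakly in $L^p$, while the finite-dimensional descriptors $(\hat{\vett r}_2,\hat{\vett u}_2,\hat{\vett v}_2,\hat{\vett w}_2)$ converge strongly in $(\R^3)^4$ (the clamp data \eqref{eq:terza} of the first rod being fixed), the limiting second frame staying orthonormal since this is a closed condition. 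The task then reduces to checking every defining property of $\U$ on the limit.

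The engine of the argument is the continuous dependence of the frame on the descriptors through \eqref{s}. Since $\vett u_i^n,\vett v_i^n,\vett w_i^n$ are unit vectors and their derivatives are products of the (bounded) densities with these vectors, the frames are bounded in $W^{1,p}([0,L_i];\R^3)$; the compact embedding $W^{1,p}\hookrightarrow\hookrightarrow C^0$ (where $p>1$ is essential) then yields, up to a subsequence, uniform convergence of the frames and, by integrating $\dot{\vett r}_i=\vett w_i$, convergence $\vett r_i^n\to\vett r_i$ in $C^1$. Writing \eqref{s} in integrated form and noting that every integrand is the product of a weakly $L^p$-convergent density and a uniformly convergent frame (a weak--strong pairing), I would pass to the limit and verify that the limit frame solves \eqref{s} with the limit densities and limit initial data. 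Thus ${\bf z}$ is an admissible element of $V$ and all the geometric objects below are the genuine frame and midlines attached to it.

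Transferring the constraints is then largely routine. The bound on $\Eloop$ follows from weak lower semicontinuity: the elastic part is convex and bounded below by \eqref{eq:effe_i}, hence weakly lower semicontinuous, while the gravitational part is continuous because $\vett r_i^n\to\vett r_i$ uniformly, so $\Eloop[{\bf z}]\le\liminf_n\Eloop[{\bf z}_n]\le M$ (this is the form in which the energy bound survives a weak limit). Conditions \eqref{midline} and \eqref{directors} pass to the limit at once from the uniform convergence of $\vett r_i^n$ and $\vett w_i^n$, and \eqref{noint} is automatic under the standing small-thickness assumption. The convex constraint \eqref{preserve3} defines a strongly closed convex subset of $L^p$, hence is weakly closed and is likewise inherited by ${\bf z}$; this will be needed to invoke Theorem \ref{2th} below. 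The linking numbers are handled topologically: the Gauss double integral of Definition \ref{linking} is continuous under $C^1$ convergence of two curves that remain disjoint (disjointness keeps its denominator bounded away from zero), and being integer valued it is constant along the tail of the sequence, so $L(\wi)=n_i$ and $L_{12}=1$ persist.

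I expect the two genuinely delicate steps to be \eqref{gi1} and the isotopy class. For global injectivity I would split the inequality: its left-hand side $J_i[{\bf z}]:=\int_{\Omega_i}(1-\zeta_1 k'_i-\zeta_2 k''_i)\,d(s,\zeta_1,\zeta_2)$ is \emph{affine} in the densities and therefore weakly continuous, whereas the right-hand side $\mathscr{L}^3(\vett p_i(\Omega_i))$ is merely upper semicontinuous under the available uniform convergence --- which is exactly what is needed. Indeed $\vett p_i^n\to\vett p_i$ uniformly forces $\vett p_i^n(\Omega_i)$ into every $\eps$-neighbourhood of the compact set $\vett p_i(\Omega_i)$ for $n$ large, so $\limsup_n\mathscr{L}^3(\vett p_i^n(\Omega_i))\le\mathscr{L}^3(\vett p_i(\Omega_i))$ as $\eps\downarrow0$. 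Chaining this with \eqref{gi1} for each ${\bf z}_n$ gives
\begin{equation*}
J_i[{\bf z}]=\lim_n J_i[{\bf z}_n]\le\limsup_n\mathscr{L}^3(\vett p_i^n(\Omega_i))\le\mathscr{L}^3(\vett p_i(\Omega_i)),
\end{equation*}
i.e. \eqref{gi1} for ${\bf z}$. Then Theorem \ref{2th} (whose hypotheses \eqref{preserve3}, \eqref{gi1}, \eqref{fi}, finiteness of $\Eloop$ and disjointness of the images are now all in force) makes the limit map $\vett p_i$ globally injective, so each limit midline is an embedded unit-speed circle and the two are disjoint; since embeddings that are $C^1$-close (equivalently, contained in a common tubular neighbourhood) are ambient isotopic, the relations $\vett r_i^n\simeq\eta_i$ pass to $\vett r_i\simeq\eta_i$. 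The main obstacle is precisely this pair: one must resist using a (false) continuity of the volume and exploit only its one-sided semicontinuity for \eqref{gi1}, and one must certify that the limit curves develop no self-contact before invoking stability of the isotopy class, which is exactly where the injectivity supplied by Theorem \ref{2th} is indispensable.
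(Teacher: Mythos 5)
Your argument is correct in substance, but it is a genuinely different proof from the one in the paper. The paper's proof is essentially a citation: after disposing of the trivial case $\U=\emptyset$, it declares the result to be an extension of Schuricht's Theorems 3.9, 4.5 and 4.6 in \cite{S}, adding only the (rather terse) remark that, the maps $\vett{p}_i$ being open and the $\Omega_i$ being closures of open sets, condition \eqref{noint} involves an intersection of closed sets. You instead reconstruct the whole argument inline, which is precisely what the word ``extension'' conceals, since Schuricht's theorems concern a single clamped rod: coercivity together with the fixed lengths and the constraint $L_{12}=1$ bounds $\U$ (a point the paper never makes explicit --- the translational descriptor $\hat{\vett r}_2$ is not controlled by the energy alone, because the gravitational term ${\rm E}_{\rm g}$ decreases without bound under translations); weak $L^p$ convergence of the densities upgrades to uniform convergence of the frames and $C^1$ convergence of the midlines via $W^{1,p}\hookrightarrow\hookrightarrow C^0$ and a weak--strong pairing in the integrated system \eqref{s}; the elastic energy passes to the limit by convexity, the Ciarlet--Ne\v{c}as inequality \eqref{gi1} by weak continuity of its affine left-hand side against upper semicontinuity of the image volume, the Gauss integrals because they are integer-valued and continuous along uniformly disjoint $C^1$-convergent curves, and the isotopy classes by $C^1$-stability once Theorem \ref{2th} certifies embeddedness of the limit midlines. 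This is the same mathematics that underlies Schuricht's proofs, so the two routes are comparable in content; yours buys self-containedness and makes visible where the closedness of each constraint actually comes from, while the paper's buys brevity at the price of leaving the two-rod adaptation unchecked.

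Two caveats, both defects of the statement rather than of your argument, and both equally untreated by the paper's proof. First, lower semicontinuity only yields $\Eloop[{\bf z}]\le M$; a set defined by the strict inequality $\Eloop<M$ is not weakly closed in general, so the theorem is literally true only with $\le M$ (your parenthetical flags this; the paper does not). Second, \eqref{noint} is an open condition: two disjoint compact rods can converge uniformly to rods that touch, so its persistence under weak limits is not ``automatic'' in any topological sense, and the paper's closing remark about intersections of closed sets does not repair this. It holds only because the paper postulates \eqref{noint} for every ${\bf z}\in V$ by a small-section assumption, which is the reading you (and implicitly the paper) rely on; a fully self-supporting statement would replace \eqref{noint} by a closed condition, e.g.\ one permitting boundary contact.
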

\begin{proof}
If $\U=\emptyset$ the thesis is obviously true.\\
If $\U \neq \emptyset$, it is an extension to Schuricht's theorems $3.9$, $4.5$ and $4.6$ \cite{S}, by remembering that
$$
\bm{p}=({\vett p}_1,{\vett p}_2) \qquad {\bf z}=(\wuno,\wdue).
$$
Moreover, since $\bm{p}_i$ are two open maps and $\Omega_i$ are the closure of two open sets in $\E^3$, condition \eqref{noint} yields the intersection of two closed sets, which is obviously a closed set and concludes the proof.
\end{proof}

\section{Main results}

Now we want to prove the existence of a solution to the Kirchhoff-Plateau problem, {\em i.e.} the existence of a minimizer of $\EKP$ given by \eqref{EKP} in the class $\U$. As a first step we find a minimizer of each its two terms. Obviously, we cannot say that the solution of our problem will be the sum of them, but this will be useful to prove the main result.

\subsection{Energy minimizer for the bounding loop}

For the first term, the functional $\Eloop$, we use a quite straightforward application of the direct method of the Calculus of variations. Recall that its expression is
$$
\begin{aligned}
\Eloop[{\bf z}] : V &\to \R \cup \{ +\infty\}\\
{\bf z} &\mapsto \Eloop[{\bf z}] = {\rm E}_{\rm el}[{\bf z}] + {\rm E}_{\rm g}[{\bf z}] = \int_{I} f({{\bf z}}(\xi),\xi) \, d\xi + {\rm E}_{\rm g}[{\bf z}],
\end{aligned}
$$
%and to apply this method, it has to satisfy some hypothesis.
In order to verify if we can apply this method to $\Eloop$, we follow the following steps.

First, we need to show that $\Eloop$ is bounded from below and proper, {\em i.e.} $\Eloop\neq +\infty$.
The second condition is by definition. For the first one, we can focus only on ${\rm E}_{\rm el}$, because ${\rm E}_{\rm g}$ is always bounded from below, since the midline is bounded. Therefore, by \eqref{eq:effe_i} we immediately obtained
%the hypothesis made on $f_i$ with $i=1,2$ in the first chapter. Indeed, we recall that
\begin{equation}
\label{esh}
{\rm E}_{{\rm el}_i}[{\bf z}] \geq C_i \int_{0}^{{L}_i} |{\bf z}_i|^p\, ds + D_iL_i \geq D_iL_i > -\infty.
\end{equation}
Hence, $\Eloop$ is bounded from below and moreover $\Eloop[{\bf z}] = +\infty$ only under complete compression.
%it follows that $\Eloop$ is bounded \comm{forse intendevi $E_{el}?$}. Moreover, by the imposed constraints $\Eloop$ is bounded from above
%$$
%{\rm E}_{\rm sh}[{\bf z}]  = +\infty \qquad {\rm if} \; f\to +\infty \quad 
%\Longleftrightarrow \quad \left \{
%\begin{aligned}
%f_1 \to +\infty \\
%f_2 \to +\infty
%\end{aligned},
%\right.
%$$
%i.e.\ the elastic energy density approaches infinity under complete compression, and it happens on a set of measure zero. 

%\comm{sì ma bisogna dimostrare che E non è infinita, non che è infinita quando c'è compressione completa...}

%By \eqref{eq:effe_i} it follows that
%$$
%\begin{aligned}
%f_1(a,s) &\geq C_1 |a|^p + D_1,\\
%f_2(a,s) &\geq {C}_2 |a|^p + D_2,
%\end{aligned}
%$$
%where $C_1, C_2 > 0$ and $D_1, D_2 \in \R$. These mean that the two addends in \eqref{esh} are bounded, so 
%the functional of the whole loop is also bounded from below.

%\comm{non ho capito se quando scrivi ``bounded'' intendi from above o from below... secondo me questo è da rivedere}

Next, consider a sequence $\{ {\bf z}_k\}_{k\in\N}$ such that
$$
\lim_{k} \Eloop[{\bf z}_{k}] = \inf_{{\bf z} \in V} \Eloop[{\bf z}] = m.
$$
Obviously, it exists $\bar{k}$ such that $\forall k \geq \bar{k}$
$$
\Eloop[{\bf z}_k] \leq m+1.
$$
Now, we notice that this sequence is bounded: this follows easily from the boudedness of the clamping parameters and by coercitivity, since
%and extract a subsequence $\{{\bf z}_{k_{\tiny h}}\}$ which ``converges'' to a ${\bf z} \in V$.  use %the following general compactness result for reflexive spaces \cite{giacomini}.
%\begin{thm}
%Let $X$ be a Banach reflexive space and $\{ x_k\}_{k \in \N}$ be a bounded sequence in $X$. Then, $\{ x_k\}$ admits a weak convergent subsequence.
%\end{thm}
%Notice that this sequence 
$$
\int_0^{L_i} |{\bf z}_{i_{k}}|^p \, dx \leq \frac{1}{C_i}\int_{0}^{{L}_i} f_i({\bf z}_{i_{k}}(s),s) \, ds - \frac{D_i{L}_i}{C_i}  \leq \frac{1}{C_i}(m+1) - \frac{D_i{L}_i}{C_i} \leq A,
$$
where $A > 0$ is a constant.
Since $V$ is a reflexive space (\cite{brezis} Ch.~4), ${\bf z}_k$ admits a weakly convergent subsequence, {\em i.e.} up to subsequences one has
$$
\exists \, {\bf z} \in V: \quad {\bf z}_k \rightharpoonup {\bf z}.
$$

%namely to show that both $\{ \wuno_k\}$ and $\{\wdue_k\}$ are bounded respectively in $\hat{V}$ and $\bar{V}$. By the construction of the functions $\wuno$ and $\wdue$, it is sufficent to study what happens to the first part, i.e.\ $\{ \wuno_{1_{k}} \}$ and $\{ \wdue_{1_{k}} \}$. Indeed, the clamping parameters are known and the functions which characterize the second rod belong to a bounded space. By coercitivity, we have
%$$
%C_1 \int_{0}^{\hat{L}} |\wuno_{1_k}|^p \, dx + D_1 \hat{L} \leq \int_{0}^{\hat{L}} f_1(\wuno_{1_k}(s),s) \, ds = \Eloop[\wuno_k] \leq 2m
%$$
%with $h \gg$. So $\norma{\wuno_{1_k}}_{L^p([0,L])} < \infty$ and the same argument can be repeated for $\{ \wdue_{1_k}\}$. 

%Finally, we can extract a weakly convergent subsequence, i.e. up to subsequences
%$$
%\exists \, {\bf z} \in V: \quad {\bf z}_k \rightharpoonup {\bf z}.
%$$

%\subsubsection*{d) Lower-semicontinuity of $\Eloop[{\bf z}]$}
Now we show that ${\rm E}_{{\rm \rm loop}}[{\bf z}]$ is weakly-lower semicontinuous (WLSC) in $V$. Remember that $\Eloop = {\rm E}_{{\rm el}}+ {\rm E}_{{\rm g}}$ and all linear functionals are LSC, so we can focus on the total stored energy ${\rm E}_{{\rm el}}$. By assumptions made on ${\rm E}_{{\rm el}_{\tiny i}}$, {\em i.e.\ }the hypotheses made on $f_i$, we obtain first that ${\rm E}_{\rm el}$ is WLSC and then the total energy associated to the bounding loop.\\
%\begin{enumerate}
%\item $f(a,\cdot)$ is a measurable function $\forall a \in \R^3 \times \R^3$,
%\item $f(\cdot, \xi)$ is continuous $\forall \xi \in I$,
%\item $f(\cdot, \xi)$ is convex $\forall \xi \in I$,
%\end{enumerate}
%$\Eloop$ is WLSC on $V$.\\

%We still have to impose the topological constraints. Namely, we have

%\begin{enumerate}
%\item ${\bf z} \in U_G$: the clamping parameters for the first rod $\vett{\hat{r}}_0$, $\vett{\hat{d}}_{10}$ and $\vett{\hat{d}}_{30}$ defined in \eqref{ug1} are fixed, while the two functions $\vett{\bar{r}}(s)$ and $\bar{D}(s)$ which specify the second rod are given;
%\item ${\bf z} \in U_C$: we fix the two circumferences $\vett{l}_1 : [0,\hat{L}] \to \R^3$ and $\vett{l}_2 : [0,\bar{L}] \to \R^3$, which encode the knot type;
%\item ${\bf z} \in U_L$: we are considering two {\em linked} tori;
%\item ${\bf z} = (\wuno,\wdue)$, where $\wuno =\pi_1({\bf z})$ and $\wdue =\pi_2({\bf z})$ satisfy \eqref{V1V2<} and respectively \eqref{gi1} and \eqref{gi2}.
%\end{enumerate}
%Hence, we can prove
To introduce and to prove the following theorem, we have to remember that we are looking for the solution of our problem not in a generic Banach space but in $U_{M,n, \eta_i}$, {\em i.e.\ }it has to satisfy the physical and the topological constraints imposed to the problem. Therefore
\begin{thm}
\label{minimol}
If there is at least one admissible 
$$
\overline{{\bf z}} = (\overline{{\bf z}}_1, \overline{{\bf z}}_2) \in \U
$$
with $M \in \R$, $n_i\in \N$ and $\eta_i: [0,L_i] \to \E^3$, then the variational problem described above has a minimizer, i.e. there exists a minimizer ${\bf z} \in \U$ for the loop energy functional. 
\end{thm}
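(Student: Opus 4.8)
The plan is to apply the Direct Method of the Calculus of Variations to $\Eloop$ restricted to the admissible class $\U$, assembling the coercivity, compactness and lower-semicontinuity facts already prepared above together with the weak closedness of $\U$ furnished by Theorem \ref{Umneta}. Since by hypothesis $\U\neq\emptyset$ contains $\overline{{\bf z}}$ with $\Eloop[\overline{{\bf z}}]<M<+\infty$, the value $m:=\inf_{{\bf z}\in\U}\Eloop[{\bf z}]$ is a genuine real number: it is bounded above by $\Eloop[\overline{{\bf z}}]$ and bounded below by $D_1L_1+D_2L_2$ through the estimate \eqref{esh}. I would then fix a minimizing sequence $\{{\bf z}_k\}_{k\in\N}\subset\U$ with $\Eloop[{\bf z}_k]\to m$, so that $\Eloop[{\bf z}_k]\le m+1$ for all large $k$.

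The next step is to produce a weak limit. Exactly as in the computation preceding the statement, the bound $\Eloop[{\bf z}_k]\le m+1$ combined with \eqref{esh} controls $\int_0^{L_i}|{\bf z}_{i_k}|^p\,ds$ uniformly in $k$, while the clamping and orthonormality data $(\hat{\vett r}_2,\hat{\vett u}_2,\hat{\vett v}_2,\hat{\vett w}_2)$ of the second rod vary in compact sets; hence $\{{\bf z}_k\}$ is bounded in the reflexive space $V=V_1\times V_2$. By reflexivity I may pass to a subsequence (not relabelled) and find ${\bf z}\in V$ with ${\bf z}_k\rightharpoonup{\bf z}$.

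Now the two decisive ingredients enter. By Theorem \ref{Umneta} the set $\U$ is weakly closed, so the weak limit inherits every admissibility requirement and ${\bf z}\in\U$; in particular the topological data, namely $L({\bf z}_i)=n_i$, $L_{12}=1$ and the isotopy class $({\vett r}_1,{\vett r}_2)\simeq(\eta_1,\eta_2)$, together with the non-interpenetration conditions \eqref{gi1} and \eqref{noint}, all survive the passage to the limit. By the weak lower semicontinuity of $\Eloop$ established above one then has $\Eloop[{\bf z}]\le\liminf_k\Eloop[{\bf z}_k]=m$, whereas ${\bf z}\in\U$ forces $\Eloop[{\bf z}]\ge m$; equality holds and ${\bf z}$ is the desired minimizer. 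The one point deserving care is the strict bound $\Eloop<M$ appearing in the definition of $\U$: it is preserved because $m\le\Eloop[\overline{{\bf z}}]<M$ yields $\Eloop[{\bf z}]=m<M$ strictly.

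The substantive difficulty is concentrated entirely in the weak closedness of $\U$, that is, in the stability of the geometric and topological constraints, above all the linking numbers and the isotopy type, under the merely weak convergence ${\bf z}_k\rightharpoonup{\bf z}$ of the curvature and twist densities. This is exactly why that property is isolated beforehand in Theorem \ref{Umneta}; granting it, the remaining coercivity--compactness--lower semicontinuity chain is routine.
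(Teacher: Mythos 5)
Your proposal is correct and follows essentially the same route as the paper's own proof: a minimizing sequence in $\U$, boundedness via the coercivity estimate \eqref{esh}, extraction of a weak limit by reflexivity of $V$, membership ${\bf z}\in\U$ via the weak closedness from Theorem \ref{Umneta}, and conclusion by weak lower semicontinuity of $\Eloop$. In fact you are slightly more careful than the paper, since you explicitly verify that the strict bound $\Eloop<M$ in the definition of $\U$ survives the limit (via $m\le\Eloop[\overline{{\bf z}}]<M$), a point the paper leaves implicit inside its appeal to Theorem \ref{Umneta}.
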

\begin{proof}
Since $\overline{{\bf z}} \in \U$, {\em i.e.\ }it is a competitor, $\U \neq \emptyset$. So, let $\{ {\bf z}_k\}_{k \in \N} \in \U$ a minimizing sequence such that ${\rm E}_{{\rm loop}}[{\bf z}_k]<M$ for some $M \in \R$ be given.\\
By the coercitivity of $f_i$ with $i=1,2$, we obtain that $\U$ is a bounded subset in $V$.
So, we can extract a weakly converging subsequence ${\bf z}_{k_{\tiny h}} \rightharpoonup {\bf z}$. Moreover, as $\U$ is weakly closed in $V$ (Theorem \ref{Umneta}), so ${\bf z} \in \U$.\\
Finally, the weak lower semicontinuity of ${\rm E}_{{\rm loop}}[{{\bf z}}]$ yields
$$
{\rm E}_{{\rm loop}}[{{\bf z}}] \leq \liminf_{h} {\rm E}_{{\rm loop}}[{{\bf z}}_{k_{\tiny h}}] = \lim_{k} {\rm E}_{{\rm loop}}[{{\bf z}}_k] = \inf_{{\bf z} \in U} {\rm E}_{{\rm loop}}[{{\bf z}}],
$$
where ${\bf z}_{k_{\tiny h}}$ is a subsequence of the chosen minimizing sequence ${{\bf z}}_k$. Therefore, the weak limit ${{\bf z}}$ is a global minimizer.
\end{proof}

\subsection{Area-minimizing spanning surface}
Up to now, we only proved the existence of an energy-minimizing configuration for the bounding loop in the absence of the liquid film.
Now we want to show the existence of an area-minimizing spanning surface for the link.

%First of all, let us specify the mathematical quantities defined in section 3.6. Namely, the spanned set $H$ is $\Lambda[{\bf z}]$, the spanning set $K$ is a $D_{\Lambda[{\bf z}]}$-spanning set, as Def. \ref{surface} and we introduce the family of the $D_{\Lambda[{\bf z}]}$-spanning sets with the linking number equal to one with both the rods denoted by $F(\Lambda[{\bf z}], D)$.\\
If $\Lambda[{\bf z}]$ is rigid, De Lellis {\em et al.}\ \cite{delellis} proved an important result:
\begin{thm}
\label{area}
Fix ${\bf z} \in V$. If
$$
m_0:= \inf \{ {\rm E}_{{\rm film}}(S): S \in F(\Lambda[{\bf z}], D)\} < +\infty,
$$
then
\begin{enumerate}
\item $F(\Lambda[{\bf z}], D)$ is a {\em good class};
\item there exists $K[{\bf z}]$ a relatively closed subset of $\R^3 \setminus \Lambda[{\bf z}]$ such that $K[{\bf z}] \in F(\Lambda[{\bf z}], D)$ and $K[{\bf z}]$ is a minimizer, i.e. ${\rm E}_{{\rm film}}(K[{\bf z}])=m_0$;
\item $K[{\bf z}]$ is a countably $\hau^2$-rectifiable set and it is an $(\vett{M}, 0, \infty)$-minimal set in $\R^3 \setminus \Lambda[{\bf z}]$ in the sense of Almgren.
\end{enumerate}
\end{thm}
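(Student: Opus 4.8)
The plan is to run the direct method of the Calculus of Variations directly on the class $F(\Lambda[{\bf z}], D)$, following the strategy of De Lellis, Ghiraldin and Maggi \cite{delellis}. The delicate point is that $\hau^2$ is \emph{not} lower semicontinuous under Hausdorff convergence of compact sets, so one cannot simply take a Hausdorff limit of a minimizing sequence and pass the area to the limit. Instead one passes to the associated Radon measures, recovers a density lower bound from the variational structure, and then uses Preiss' rectifiability theorem to identify the limit. I would first establish part (1). The family $F(\Lambda[{\bf z}], D)$ is generated by the homotopically closed collection $D$ of test curves, and the decisive structural fact is that the spanning condition is preserved under any Lipschitz deformation $\varphi$ that is homotopic to the identity and coincides with the identity outside a ball disjoint from $\Lambda[{\bf z}]$. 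Indeed, if $K\cap\gamma\neq\emptyset$ for every $\gamma\in D$ and $\varphi$ is such a deformation, then $\varphi^{-1}\circ\gamma$ is again a smooth embedding homotopic to $\gamma$, hence belongs to $D$ by closedness under homotopy, and therefore meets $K$; consequently $\varphi(K)\cap\gamma\neq\emptyset$. This closedness under deformations is exactly the axiom defining a \emph{good class} in \cite{delellis}, and, together with the compactness of $\Lambda[{\bf z}]$ (which keeps the test curves uniformly away from the link), it yields (1).

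For part (2) I would take a minimizing sequence $K_j\in F(\Lambda[{\bf z}],D)$ with $\hau^2(K_j)\to m_0$ and pass to the limit on two levels simultaneously. On the one hand, up to a subsequence the sets $K_j$ converge in the local Hausdorff distance to a relatively closed set $K\subset\R^3\setminus\Lambda[{\bf z}]$. On the other hand, the uniform bound $\hau^2(K_j)\le m_0+1$ lets me extract a weak-$*$ limit $\mu$ of the Radon measures $\mu_j$ defined by $\mu_j(A)=\hau^2(K_j\cap A)$. The spanning property survives the Hausdorff limit: given $\gamma\in D$, picking $p_j\in K_j\cap\gamma$ and using compactness of $\gamma$, one finds $p_j\to p\in\gamma$ with $p\in K$, so $K\cap\gamma\neq\emptyset$ and hence $K\in F(\Lambda[{\bf z}],D)$. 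In particular $\hau^2(K)\ge m_0$.

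The hard part is the reverse inequality together with rectifiability, and this is where the good-class structure is used quantitatively. Using cone and cup competitor comparisons localized in small balls, one shows that the limit measure $\mu$ carries a uniform lower density bound $\theta^{*2}(\mu,x)\ge 1$ at $\mu$-almost every $x$ (away from $\Lambda[{\bf z}]$), as well as an upper density bound; the lower bound also forces $\mu$ to dominate $\hau^2$ on $K$. Having produced a Radon measure with positive and finite $2$-density almost everywhere, I invoke Preiss' rectifiability theorem \cite{preiss} to conclude that $\mu$ is a $2$-rectifiable measure, and hence that $K=\mathrm{supp}\,\mu$ is countably $\hau^2$-rectifiable. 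Combining $\hau^2(K)\le\mu(\R^3)\le\liminf_j\hau^2(K_j)=m_0$ with the bound $\hau^2(K)\ge m_0$ coming from the spanning property gives $\hau^2(K)=m_0$, so $K$ is a minimizer and (2) follows.

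Finally, part (3): the rectifiability of $K$ is already in hand, and the $(\vett{M},0,\infty)$-minimality in the sense of Almgren is immediate from minimality within the good class. Indeed, any Lipschitz deformation of $K$ supported in a ball disjoint from $\Lambda[{\bf z}]$ is, by the good-class property established in (1), again a competitor in $F(\Lambda[{\bf z}],D)$, and therefore cannot strictly decrease the area; this is precisely Almgren's minimality condition. I expect the genuinely delicate steps to be the extraction of the density lower bound from the competitor constructions and the careful bookkeeping that identifies the Hausdorff limit $K$ with the support of the weak-$*$ limit measure $\mu$; everything else is either soft compactness or a direct consequence of the homotopical definition of $D$.
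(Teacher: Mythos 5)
The paper does not actually reprove this statement: as it says explicitly, Theorem \ref{area} is the result of De Lellis, Ghiraldin and Maggi applied with $H=\Lambda[{\bf z}]$ and $C=D$, and the paper's proof is the single remark that it is ``just a combination of Theorem 2 and 3 in \cite{delellis}'' (the only point to check being that $D$ is closed by homotopy, which holds because the linking number and the property of being non-nullhomotopic are invariants of the homotopy class in $\E^3\setminus\Lambda[{\bf z}]$). Your proposal instead tries to reconstruct the internal DLGM argument. The skeleton you describe (minimizing sequence, weak-$*$ limit of the measures $\hau^2\lefthalfcup K_j$, density bounds from competitor comparisons, Preiss' theorem, identification of the limit) is indeed their strategy, but the one step you argue in detail is wrong. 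You justify part (1) by claiming that spanning is preserved under every Lipschitz deformation $\varphi$ homotopic to the identity, on the grounds that ``$\varphi^{-1}\circ\gamma$ is again a smooth embedding homotopic to $\gamma$''. That pullback only makes sense when $\varphi$ is a diffeomorphism: Almgren deformations are not injective, and $\varphi^{-1}(\gamma)$ is in general only a compact set, not a curve. Moreover, closedness under deformations is not the good-class axiom of \cite{delellis}: the axiom requires that, for every ball $B$ with $\overline{B}\cap\Lambda[{\bf z}]=\emptyset$, the infimum of $\hau^2$ over class elements agreeing with $K$ outside $B$ be dominated by the area of the cone competitor and of every cup competitor. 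The cup competitor contains a piece of $\partial B$ and is not the image of $K$ under any map, and proving that these competitors are still $D$-spanning (radial projection from the center for the cone; pushing the arcs of a test curve through the distinguished open component of $\partial B\setminus K$ for the cup) is precisely the delicate content of the cited theorems, which your argument bypasses with a claim that is false as stated.

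The same gap resurfaces in your part (3): Almgren $(\vett{M},0,\infty)$-minimality is not ``immediate from minimality within the good class'', because for a general Lipschitz deformation $\varphi$ the set $\varphi(K[{\bf z}])$ is not known to belong to $F(\Lambda[{\bf z}],D)$; in \cite{delellis} Almgren minimality is extracted from the good-class comparison together with the density estimates, not from deformation invariance of the spanning class. Finally, in part (2) the inequality $\hau^2(K)\leq\mu(\R^3)$ for the Hausdorff limit $K$ needs a uniform positive lower $2$-density of $\mu$ at every point of $K$ (not merely on ${\rm supt}\,\mu$), which in turn forces a careful selection of the minimizing sequence (each $K_j$ almost minimizing in small balls); you defer this, together with all the density estimates, to ``cone and cup competitor comparisons'', which is where essentially the whole of \cite{delellis} lives. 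In short: your outline correctly identifies the architecture of the DLGM proof, but the parts you do argue contain a genuine error, and the parts you defer are the theorem. At the level of this paper, the intended and efficient proof is the citation, exactly as the authors give it.
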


For a precise definition of {\em good class} and $(\vett{M}, 0, \infty)$-minimal set in the sense of Almgren, see respectively \cite{delellis} and \cite{almgren}. In our case the first one is just a family of subsets in which we can control their measures. Namely, it exists a selected and well-defined competitors $L$ with finite $2$-dimensional Hausdorff measure which control the measure of each element of the good class. The second one, instead, is a property of regularity on the subset $K[{\bf z}]$.
The theorem is just a combination of Theorem 2 and 3 in \cite{delellis}.

\subsection{Main result}

Now we come to our main result. Since we are dealing with approximating surfaces, we need to specify the notion of convergence of surfaces. We do this following Fried {\em et al.} \cite{ggg}.

%De Lellis' approach \cite{delellis} is very strong since it establishes the existence of a minimum in a very general situation \cite{H}. However, in our problem the set which is spanned by the surface changes along minimizing sequences, since we are solving the Kirchhoff-Plateau problem, so we have to adapt their arguments to our problem. The best approach is using one presented by Giusteri {\em et al.}in \cite{ggg}. So, let us introduce
\begin{definizione}
Let $A,B$ be two non empty subsets of a metric space $(M,d_M)$. The {\em Hausdorff distance} between $A$ and $B$ is defined by
$$
d_H(A,B):= \max\{\sup_{a\in A}\inf_{b\in B}d_M(a,b),\,\sup_{b\in B}\inf_{a\in A} d_M(a,b)\}.
$$
\end{definizione}
If we consider all non-empty subsets of $M$, then $d_H$ is a {\em pseudo-metric}, i.e.\ we can always find two subsets $A,B$ with $A \neq B$ such that $d_H(A,B)=0$. However, the set $K(M)$ of non empty {\em compact} subsets of $M$ is a metric space. Moreover, the topology induced by $d_H$ on all closed non empty subsets of $M$ does not depend on $d_M$ and it is said {\em Hausdorff topology}.\\

The problem we have to solve is connected to the fact that $\Lambda[{\bf z}_k]$, the closed subset in $\R^3$ occupied by the whole link, changes along the minimizing sequence. So we have to consider sequences of nonempty closed sets, possibly converging to a closed set, which might be our minimal link. Moreover, since $(K(M), d_{H})$ is not only a metric space, but it is also compact with the distance $d_{H}$, then if we take a bounded sequence in $(K(M), d_{H})$, we can always extract a convergent sequence, using Blaschke's theorem \cite{blaschke}. So, it is reasonable to consider as an assumption of next theorem the existence of a sequence of subsets $\Lambda_k$ which converge to something in the Hausdorff topology, denoted by $\Lambda_k\xrightarrow{H} \Lambda$.

%Moreover,  This means that combing this information with Thm \ref{area}, we can state the existence of the area minimizing with an elastic boundary. \comm{sembra importante ma non mi è chiaro...}

%Hence, we first prove
\begin{thm}
\label{primo2}
Let $\Lambda_k$ a sequence of closed non empty subsets of $\E^3$ converging in the Hausdorff topology to a closed set $\Lambda \neq \emptyset$. Assume that
\begin{itemize}
\item[{i}{\rm )}] $\forall k\in\N, S_k\in F(\Lambda_k[{\bf z}], D)$, where $F(\Lambda_k[{\bf z}], D)$ is a good class;
\item[{ii}{\rm )}] $S_k$ is a countably $\hau^2$-rectifiable set;
\item[{iii}{\rm )}] $\hau^2(S_k) = \inf \{\hau^2(S) : S\in F(\Lambda_k[{\bf z}],D)\}<+\infty.$
\end{itemize}
Then the sequence of measures $\mu_k:= \hau^2\lefthalfcup S_k$ is a bounded sequence,  $\mu_k \xrightharpoonup{*} \mu$, up to subsequences, and
$$\mu \geq \hau^2\lefthalfcup S_{\infty},\hbox{ where $S_{\infty}= (\hbox{supt}\, \mu) \setminus \Lambda$ and it is a $\hau^2$-rectifiable set.}$$
\end{thm}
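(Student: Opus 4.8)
The plan is to run the direct-method / geometric-measure-theory machinery of De Lellis \emph{et al.} \cite{delellis}, adapted to the \emph{moving} boundary $\Lambda_k$: produce the limit measure by compactness, then identify it on its support through density estimates and Preiss' theorem.

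First I would establish the uniform mass bound that makes $\{\mu_k\}$ a bounded sequence. By (iii) each $S_k$ realizes the infimum of $\hau^2$ over $F(\Lambda_k[{\bf z}],D)$, so it suffices to exhibit, for each $k$, one competitor spanning $\Lambda_k$ with area bounded independently of $k$. Since $\Lambda_k\xrightarrow{H}\Lambda$, all the $\Lambda_k$ eventually lie in a fixed ball $B_R$, and the good-class structure in (i) furnishes a controlling competitor $L$ with $\hau^2(L)<+\infty$; comparing $S_k$ with a uniformly small modification of $L$ gives $\sup_k\hau^2(S_k)=\sup_k\mu_k(\E^3)\le C<+\infty$. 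A convex-hull / maximum-principle argument confines each $S_k$ to $B_R$, so all the $\mu_k$ are Radon measures on the fixed compact set $\overline{B_R}$, and the sequential weak-$*$ compactness of bounded Radon measures (Banach--Alaoglu) yields a subsequence with $\mu_k\xrightharpoonup{*}\mu$.

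Next I would set $S_\infty=(\mathrm{supt}\,\mu)\setminus\Lambda$ and derive the two density estimates that pin down $\mu$ on $S_\infty$. The upper bound is part of the $(\vett{M},0,\infty)$-minimality recorded in Theorem~\ref{area}: it yields the Ahlfors upper regularity $\hau^2(S_k\cap B_r(x))\le C r^2$, which passes to the limit as $\mu(\overline{B_r(x)})\le C r^2$; hence $\Theta^{*2}(\mu,\cdot)\le C$ everywhere and $\mu\ll\hau^2$. The lower bound is the variational heart: replacing $S_k\cap B_r(x)$ by the \emph{cone} with vertex $x$ (for $x$ near $\mathrm{supt}\,\mu$ and $r$ small with $\Lambda_k\cap B_r(x)=\emptyset$), together with minimality, produces a monotonicity-type inequality forcing the sharp lower density $\Theta^2_*(\mu,x)\ge 1$ at every $x\in S_\infty$, so that $0<\Theta^2_*(\mu,x)\le\Theta^{*2}(\mu,x)<\infty$ there. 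One must check that the cone/\emph{cup} competitors remain admissible, i.e.\ that the surgered sets still meet every $\gamma\in D$; this is exactly where the Hausdorff convergence $\Lambda_k\to\Lambda$ and the homotopy-closedness of $D$ are used to keep competitors inside the good class uniformly in $k$.

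Finally I would invoke Preiss' rectifiability theorem \cite{preiss}: having shown that the $2$-density of $\mu$ exists, is finite and strictly positive for $\mu$-a.e.\ point (existence being extracted from the monotonicity formula for the minimizers and its stability under the weak-$*$ limit), $\mu$ is a $2$-rectifiable measure, and therefore $S_\infty=\mathrm{supt}\,\mu\setminus\Lambda$ is $\hau^2$-rectifiable. The measure inequality then follows from the Besicovitch differentiation theorem: writing $\mu\lefthalfcup(\E^3\setminus\Lambda)=\theta\,\hau^2\lefthalfcup S_\infty$ (absolute continuity from $\mu\ll\hau^2$ together with the rectifiability of the carrier $S_\infty$), the lower density estimate gives $\theta\ge1$ $\hau^2$-a.e.\ on $S_\infty$, whence $\mu\ge\hau^2\lefthalfcup S_\infty$. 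I expect the genuine obstacle to be twofold: securing the \emph{sharp} lower-density constant $1$ (rather than a dimensional $2^{-2}$ factor) through the cone competitor while checking its admissibility against the linking constraints defining $D$, and verifying the hypothesis of Preiss' theorem (existence of the density $\mu$-a.e., not merely two-sided density bounds); the remaining compactness and semicontinuity steps are routine.
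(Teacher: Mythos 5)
Your proposal follows essentially the same route as the paper's proof: a uniform mass bound obtained from the good-class competitor, weak-$*$ compactness of the resulting bounded Radon measures, density estimates valid away from $\Lambda$ (where the Hausdorff convergence $\Lambda_k\xrightarrow{H}\Lambda$ guarantees balls avoiding all $\Lambda_k$ for large $k$), and Preiss' theorem to write $\mu=\theta\,\hau^2\lefthalfcup S_\infty$ with $\theta\ge 1$ and conclude rectifiability of $S_\infty$. The only difference is one of detail: where you re-derive the upper and lower density bounds via cone and cup competitors and check their admissibility against the spanning class, the paper outsources exactly that package to the cited result of De Lellis \emph{et al.}, so the two arguments coincide in substance.
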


%\end{document}

\begin{proof}

%a) $\mu_k$ is a bounded sequence}\\
Let $\mu_k=\hau^2\lefthalfcup S_k$ and $S_k \in F(\Lambda_k[{\bf z}], D)$. Since $F(\Lambda_k[{\bf z}], D)$ is a good class, for all 
$J\in F(\Lambda_k[{\bf z}],D)$ one has
$$
\mu_k(J)= \hau^2\lefthalfcup S_k(J)=\hau^2(S_k \cap J) \leq \hau^2(J) \leq \hau^2(L)<+\infty,
$$
where $L$ is the generic competitor in the {\em good class}.
Hence, $\mu_k$ is a bounded Radon measure, therefore (\cite{evans}, pp. 54-59), up to a subsequence (not relabeled), $\mu_k\xrightharpoonup{*}\mu$.

Now let $f\in C^{\infty}_c(\R^n)$ with $0 \leq f\leq\chi_{B(x,r)}$; by the weak* convergence of $\mu_k$ we have
$$\int_{B(x,r} f d\mu = \lim_{k} \int_{B(x,r)} f d\mu_k \leq \liminf_{k} \mu_k(B(x,r))$$
so that $\mu(B(x,r)) \leq \liminf_{k} \mu_k(B(x,r))$.

Since $\Lambda_k\xrightarrow H \Lambda$, for any $x\in S_\infty=(\hbox{supt}\,\mu)\setminus \Lambda$ we can find a radius $r$ such that 

$$0<r<d(x,\Lambda)\quad\Longleftrightarrow\quad d(x,\Lambda_k)>\frac12\,d(x,\Lambda).$$

Moreover, if we assume that $\mu(\R^n) =\lim_{k}\mu_k(\R^n)$, combining De Lellis' result with this additional specification, we have that 
$$\mu = \theta\hau^2 \lefthalfcup S_{\infty} \hbox{ with } S_{\infty}=(\hbox{supt}\,\mu) \setminus \Lambda.$$

Indeed
\begin{multline}
\mu(\R^n)= \lim_{k}\mu_k(\R^n) \geq \liminf_{k}\mu_k(\R^n)=\\
\liminf_{k} \mu_k(B(x,r)) + \liminf_{k}\mu_k(\R^n \setminus \overline{B(x,r)})\\
\qquad \qquad \qquad \qquad \qquad \geq \mu(B(x,r)) + \mu(\R^n \setminus \overline{B(x,r)}) = \mu(\R^n),
\end{multline}
where the last equivalence is true because if a set $\{r>0:\mu(\partial B(x,r))\neq 0\}$ is countable, then its Lebesgue measure is zero. Finally, $\mu (\R^n) = \lim_{k}\mu_k(\R^n)$ is obviously true because of the definition of $\mu_k$ and the convergence in the Hausdorff topology.

Let's see that $S_{\infty}$ is a $\hau^2$-rectifiable set. If we fix $x \in S_{\infty}$, i.e. $d(x,\Lambda) > 0$, the function
$r\mapsto\mu (B(x,r))/r^n$ is increasing on $(0, d(x, \Lambda_k))$. By using Preiss' results \cite{preiss}, we can find immediately that
$$\mu = \theta \hau^2 \lefthalfcup \tilde{K}$$
where $\tilde{K}$ is a $\hau^2$-rectifiable set. By the definition of the support of a measure $\tilde{K}=S_{\infty}$. 
\end{proof}

However, this is still not enough. Up to now, we proved in a separate way that the two functionals, the one associated with the elastic link and the other with the film, admit global minimizers. 

Now, first we have to rewrite the second result in terms of the configurations of our system since we only prove the existence of the minimal surface in the presence of a changing boundary, and then we have to write the solution to our problem, i.e.\ making a balance of the two contributions. \\

\begin{thm}
\label{tabular}
Let us suppose that
\begin{itemize}
\item[i)] $\{ {\bf z}_k\}_{k \in \N} \subseteq \U$ a sequence such that ${\bf z}_k \rightharpoonup {\bf z}$ with ${\bf z} \in \U$;
\item[ii)] $S_k \in F(\Lambda[{\bf z}_k], D)$;
\item[iii)] $\gamma$ is a smooth embedding like the one defined in Def \ref{dspanning}.
\end{itemize}
Then there exist two constants $\varepsilon >0$ and $M=M(\varepsilon) >0$ such that $U_{2\varepsilon}(\gamma) \subseteq \E^3 \setminus \Lambda[{\bf z}]$ and $\forall k \geq k_0$
$$
\hau^2(S_k \cap U_{\varepsilon}(\gamma)) \geq M.
$$
\end{thm}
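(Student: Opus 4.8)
The plan is to obtain the lower bound by foliating a tubular neighbourhood of $\gamma$ by parallel copies of $\gamma$, each of which is forced to meet $S_k$ because it still spans the (moving) link, and then to convert this one‑parameter family of intersection points into an area estimate through a Lipschitz projection onto the cross‑sectional disk. First I would fix the geometry: since $\gamma$ is a smooth embedding of $\mathbb{S}^1$ with $\gamma(\mathbb{S}^1)$ compact and contained in the open set $\E^3\setminus\Lambda[{\bf z}]$, the distance $d(\gamma(\mathbb{S}^1),\Lambda[{\bf z}])$ is strictly positive. By the tubular neighbourhood theorem I would choose $\varepsilon>0$ smaller than both the normal injectivity radius of $\gamma$ and $\tfrac12\,d(\gamma(\mathbb{S}^1),\Lambda[{\bf z}])$, so that the normal map $\Psi:\mathbb{S}^1\times D^2_{2\varepsilon}\to\E^3$, $\Psi(\theta,y)=\gamma(\theta)+y_1N_1(\theta)+y_2N_2(\theta)$ (for a smooth orthonormal normal frame $(N_1,N_2)$ along $\gamma$), is a diffeomorphism onto $U_{2\varepsilon}(\gamma)$ with $U_{2\varepsilon}(\gamma)\subseteq\E^3\setminus\Lambda[{\bf z}]$. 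This already yields the first assertion of the statement.

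Next I would transfer the weak convergence to the boundary. Since $p>1$, the embedding $W^{2,p}\hookrightarrow C^1$ is compact, so ${\bf z}_k\rightharpoonup{\bf z}$ forces the midlines $\vett r_i[{\bf z}_k]$ and their frames to converge uniformly; continuity of the maps $\vett p_i$ then gives $\Lambda[{\bf z}_k]\xrightarrow{H}\Lambda[{\bf z}]$. As the compact set $\overline{U_{2\varepsilon}(\gamma)}$ is disjoint from $\Lambda[{\bf z}]$, Hausdorff convergence produces an index $k_0$ with $\overline{U_{2\varepsilon}(\gamma)}\cap\Lambda[{\bf z}_k]=\emptyset$ for all $k\ge k_0$.

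Then comes the foliation and the spanning property. For each $y\in D^2_\varepsilon$ set $\gamma_y(\theta):=\Psi(\theta,y)$, an embedded circle lying in $U_\varepsilon(\gamma)$. For $k\ge k_0$ the straight homotopy $t\mapsto\Psi(\theta,ty)$ keeps $\gamma_y$ homotopic to $\gamma$ inside $U_{2\varepsilon}(\gamma)\subseteq\E^3\setminus\Lambda[{\bf z}_k]$, so by homotopy invariance of the linking number (Definition \ref{linking}) one has $L(\gamma_y,H^k_j)=L(\gamma,H^k_j)$ for each rod $H^k_j$ of $\Lambda[{\bf z}_k]$. Because the midlines converge while remaining disjoint from $\gamma$, the integer linking numbers $L(\gamma,H^k_j)$ converge to $L(\gamma,H_j)=1$ and hence equal $1$ for $k$ large; thus $\gamma_y$ is not null‑homotopic and has linking number one with both rods, i.e.\ $\gamma_y\in D_{\Lambda[{\bf z}_k]}$ in the sense of Definition \ref{dspanning}. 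Since $S_k\in F(\Lambda[{\bf z}_k],D)$, it follows that $S_k\cap\gamma_y(\mathbb{S}^1)\neq\emptyset$ for every $y\in D^2_\varepsilon$ and every $k\ge k_0$.

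Finally I would convert this into area. Let $\pi:=\mathrm{pr}_2\circ\Psi^{-1}:U_{2\varepsilon}(\gamma)\to D^2_{2\varepsilon}$, which is Lipschitz with some constant $L$ (close to $1$ for small $\varepsilon$) because $\Psi$ is a diffeomorphism with bounded differential on a compact set. The fibre $\pi^{-1}(y)$ is exactly $\gamma_y(\mathbb{S}^1)$, so the previous step gives $\pi\big(S_k\cap U_\varepsilon(\gamma)\big)\supseteq D^2_\varepsilon$. Since $\pi$ is $L$‑Lipschitz, $\hau^2\big(\pi(S_k\cap U_\varepsilon(\gamma))\big)\le L^2\,\hau^2\big(S_k\cap U_\varepsilon(\gamma)\big)$, whence
$$
\hau^2\big(S_k\cap U_\varepsilon(\gamma)\big)\ \ge\ L^{-2}\,\hau^2(D^2_\varepsilon)\ =\ L^{-2}\pi\varepsilon^2\ =:\ M(\varepsilon)>0
$$
for all $k\ge k_0$, as claimed. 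I expect the main obstacle to be the spanning step: one must guarantee that the parallel curves $\gamma_y$ genuinely belong to the spanning class $D_{\Lambda[{\bf z}_k]}$ of the \emph{moving} boundary, which hinges on combining homotopy invariance of the linking number with the Hausdorff convergence $\Lambda[{\bf z}_k]\to\Lambda[{\bf z}]$ to stabilise the integer linking numbers at the value $1$. The area conversion is then routine once the Lipschitz projection is in place.
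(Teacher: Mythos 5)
Your proposal is correct and follows essentially the same route as the paper's proof: a tubular neighbourhood $U_{2\varepsilon}(\gamma)$ kept away from $\Lambda[{\bf z}]$, uniform (Hausdorff) convergence of $\Lambda[{\bf z}_k]$ extracted from the weak convergence ${\bf z}_k \rightharpoonup {\bf z}$, a foliation of the neighbourhood by parallel circles $\gamma_y$ that remain in the spanning class $D_{\Lambda[{\bf z}_k]}$ and hence must meet $S_k$, and finally the Lipschitz projection onto the cross-sectional disk yielding $\hau^2(S_k \cap U_\varepsilon(\gamma)) \geq \pi\varepsilon^2/(\mathrm{Lip}\,\hat{\pi})^2$. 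Your explicit verification that the linking numbers stabilise at the value one (via the straight-line homotopy and integrality) is in fact a point the paper treats more cursorily, so your write-up is, if anything, slightly more complete on that step.
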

Moreover, Theorem \ref{tabular} says that the intersection between the sequence of the surface $\{S_k\}$, for large $k$, and a neighborhood of the smooth embedding $\gamma$ is not a point but a set with positive measure. Since a tubular neighborhood $U_{\varepsilon}(\gamma)$ exists every time \cite{abate} and that it depends only on the embedding $\gamma$, we can state that the surface $S_{\infty}$ which realizes the area minimal set (Theorem \ref{primo2}) belongs to $F(\Lambda[{\bf z}], D)$. This theorem is fundamental in order to rewrite everything in terms of the configuration ${\bf z}$ only, {\em i.g.\ }we solve the first gap mentioned before.\\
Indeed, suppose true Theorem \ref{primo2} and \ref{tabular}; if we assume by contradiction that
$ S_{\infty} \notin F(\Lambda[{\bf z}], D)$, this would mean that
$$
\exists \gamma \in D_{\Lambda[{\bf z}]}: \left \{
\begin{aligned}
\gamma \cap S_{\infty} &= \emptyset\\
L(\gamma, \vett{{r}}_1) &\neq 1\\
L(\gamma, \vett{{r}}_2) &\neq 1.
\end{aligned}
\right.
$$
If $\gamma \cap S_{\infty} = \emptyset$, we have $\mu(U_{\varepsilon}(\gamma))=0$, with $\varepsilon$ defined in Theorem \ref{tabular}. Hence
$$
0 = \mu(U_{\varepsilon}(\gamma)) \geq \hau^2\lefthalfcup S_{\infty}(U_{\varepsilon}(\gamma)) = \hau^2(S_{\infty} \cap U_{\varepsilon}(\gamma))= \lim_{k} \hau^2( S_k \cap U_{\varepsilon}(\gamma))
$$
which implies
$$
\lim_{k} \hau^2( S_k \cap U_{\varepsilon}(\gamma))=0,
$$
which contradicts the thesis of Theorem \ref{tabular}.
Precisely, both $L(\gamma, \vett{{r}}_1) \neq 1$ and $L_n(\gamma, \vett{{r}}_2) \neq 1$ cannot be achieved because the sequence of $\Lambda_k$ converges in the Hausdorff topology, {\em i.e.} it implies a uniform convergence. Hence, we can state that $F(\Lambda[{\bf z}], D)$ is a weakly closed subset with respect to the weak* convergence.\\

Now, for the proof of Theorem \ref{tabular}, we can say that it is similar to the one presented by Fried et al.\ \cite{ggg} with some modifications. Remember that we are considering a link so, for example, the constant $\eps$ is the same for the whole system and we have to consider the embedding which has the linking number equal to one with both the filaments.
We are now ready to prove our final and main result.
\begin{thm}
Let $M \in \R$, $n_i\in \N$ and $\eta_i:[0,L_i]\to\E^3$ two circumferences be given.
 %the global gluing condition \comm{cos'è esattamente?}, the clamping parameters $(\hat{\vett u}_1,\hat{\vett v}_1,\hat{\vett w}_1)$ for the first rod and the functions $(\hat{\vett r}_2,\hat{\vett u}_2,\hat{\vett v}_2,\hat{\vett w}_2)$ for the second rod be given.
If there exists $\vett{\tilde{{\rm z}}}=(\overline{{\bf z}}_1, \overline{{\bf z}}_2) \in \U$, then there exists a solution ${\bf z}\in \U$ to the Kirchhoff-Plateau problem, i.e. there exists a minimizer ${\bf z}$ for the energy functional ${\rm E}_{KP}$.
\end{thm}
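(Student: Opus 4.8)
The plan is to run the direct method one final time on the full functional $\EKP$, using Theorem \ref{minimol} (existence of the loop minimizer) and Theorem \ref{area} (existence of the film minimizer for a \emph{fixed} configuration) as building blocks, and using Theorems \ref{primo2}--\ref{tabular} to pass the film term to the limit along a minimizing sequence. First I would fix the admissible competitor $\overline{{\bf z}}\in\U$ given by hypothesis, so that $\inf_{\U}\EKP=:m<+\infty$, and select a minimizing sequence $\{{\bf z}_k\}\subseteq\U$ with $\EKP[{\bf z}_k]\to m$. Since $\EKP[{\bf z}]\geq\Eloop[{\bf z}]$ and $\Eloop$ is bounded below and coercive by \eqref{esh}, the sequence $\{{\bf z}_k\}$ is bounded in the reflexive space $V$; hence, up to a subsequence, ${\bf z}_k\rightharpoonup{\bf z}$, and the weak closedness of $\U$ (Theorem \ref{Umneta}) forces ${\bf z}\in\U$.

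The first key point is that weak convergence of the descriptors in $V$ propagates, through the ODE system \eqref{s} together with the initial data, to uniform convergence of the midlines $\vett r_i$ and of the adapted frames $(\vett u_i,\vett v_i,\vett w_i)$: the solution operator of \eqref{s} sends weakly convergent densities to strongly, hence uniformly, convergent frames via the compact embedding $W^{1,p}\hookrightarrow C^0$ in one dimension. Consequently the physical solids converge in the Hausdorff topology, $\Lambda[{\bf z}_k]\xrightarrow{H}\Lambda[{\bf z}]$, which is exactly the standing assumption of the film machinery. Next, for each $k$ Theorem \ref{area} provides an area-minimizing $D$-spanning set $S_k\in F(\Lambda[{\bf z}_k],D)$ that is countably $\hau^2$-rectifiable and realizes the infimum defining the film term of $\EKP[{\bf z}_k]$. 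Because $\EKP[{\bf z}_k]$ is bounded and $\Eloop$ is bounded below, the areas $\hau^2(S_k)$ are uniformly bounded, so the hypotheses of Theorem \ref{primo2} are met: the measures $\mu_k=\hau^2\lefthalfcup S_k$ converge weakly-$*$ (along a further subsequence) to $\mu\geq\hau^2\lefthalfcup S_\infty$, with $S_\infty=(\mathrm{supt}\,\mu)\setminus\Lambda[{\bf z}]$ rectifiable. Theorem \ref{tabular}, together with the contradiction argument following it, then guarantees that $S_\infty\in F(\Lambda[{\bf z}],D)$, i.e. the limit surface still spans the limiting link.

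Finally I would assemble the lower semicontinuity of the total energy. On the film side, weak-$*$ lower semicontinuity of mass on the open set $\R^3$, combined with $\mu\geq\hau^2\lefthalfcup S_\infty$ and the admissibility of $S_\infty$, yields
$$
\inf_{S\in F(\Lambda[{\bf z}],D)}{\rm E}_{\rm film}(S)\leq 2\sigma\hau^2(S_\infty)\leq 2\sigma\,\mu(\R^3)\leq\liminf_k 2\sigma\hau^2(S_k)=\liminf_k\inf_{S\in F(\Lambda[{\bf z}_k],D)}{\rm E}_{\rm film}(S).
$$
Combining this with the weak lower semicontinuity of $\Eloop$ and the superadditivity $\liminf_k(a_k+b_k)\geq\liminf_k a_k+\liminf_k b_k$ gives
$$
\EKP[{\bf z}]=\Eloop[{\bf z}]+\inf_{S\in F(\Lambda[{\bf z}],D)}{\rm E}_{\rm film}(S)\leq\liminf_k\EKP[{\bf z}_k]=m,
$$
and since ${\bf z}\in\U$ this shows that ${\bf z}$ is a minimizer.

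I expect the genuine obstacle to be precisely the admissibility of the limit surface, $S_\infty\in F(\Lambda[{\bf z}],D)$. Theorems \ref{primo2} and the weak-$*$ convergence only deliver that $S_\infty$ is rectifiable and that its mass does not exceed the limit of the masses; by themselves they do not rule out the film collapsing onto the (moving) link, which would destroy the spanning condition in the limit. It is the uniform area lower bound of Theorem \ref{tabular}, which couples the Hausdorff convergence of the boundary to a positive amount of film near every spanning curve $\gamma\in D$, that closes this gap and makes the film infimum itself lower semicontinuous along the sequence.
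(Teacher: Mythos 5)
Your proposal is correct and follows essentially the same route as the paper's proof: a minimizing sequence made weakly convergent by coercivity, membership of the limit in $\U$ via Theorem \ref{Umneta}, the minimizing surfaces $S_k$ from Theorem \ref{area}, the measure-convergence and rectifiability of Theorem \ref{primo2}, admissibility of $S_\infty$ via Theorem \ref{tabular} and the contradiction argument, and the same closing chain of inequalities giving lower semicontinuity of the film term. If anything, you make explicit two points the paper leaves implicit (the Hausdorff convergence $\Lambda[{\bf z}_k]\xrightarrow{H}\Lambda[{\bf z}]$ deduced from weak convergence through the ODE system, and the superadditivity of $\liminf$ when adding the two energy contributions), which strengthens rather than changes the argument.
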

\begin{proof}
Let $\{{\bf z}_k\}$ be a minimizing sequence for ${\rm E}_{KP}$. First of all, by coercivity, we have
$$
\Eloop[{\bf z}_k] \leq C_1 \qquad \hau^2(S_k) \leq C_2,
$$
where $C_1, C_2 >0$ and $S_k \in F(\Lambda[{\bf z}], D)$. Precisely, if ${\bf z}_k \in \U$, by weak closure we can extract a subsequence ${\bf z}_{k_{i}}$ such that
$$
{\bf z}_{k_{i}} \rightharpoonup {\bf z},
$$
where ${\bf z}\in \U$. Now ${\rm E}_{KP}$ is WLSC on $V$. Indeed, by Theorem \ref{minimol}, $\Eloop$ is WLSC, so we only need to show that the functional
\begin{equation}
\label{wlsc}
{\bf z} \mapsto \inf \{ \hau^2(S): S \in F(\Lambda[{\bf z}], D)\} \quad \hbox{is WLSC}.
\end{equation}
To this end, consider $S_k \in F(\Lambda[{\bf z}_k], D)$ such that
$$
\hau^2(S_k) = \inf\{ \hau^2(S): S \in F(\Lambda[{\bf z}_k], D)\} < \infty
$$
By Theorem \ref{primo2}, we find immediately that
$$
\mu \geq \hau^2\lefthalfcup S_{\infty},
$$
where $S_{\infty} = \hbox{supt}\,\mu\setminus \Lambda[{\bf z}]$ and it belongs to $F(\Lambda[{\bf z}], D)$, by the previous remarks. Hence, we obtain the chain of inequalities
\begin{multline}
\liminf_{k} \inf\{ \hau^2(S): S \in F(\Lambda[{\bf z}_k], D)\} \\
\geq \liminf_{k} ( \hau^2(S_k)) = \liminf _{k} \mu_k(\R^3) = \mu(\R^3) \\
\geq \hau^2(S_{\infty}) \geq \inf \{ \hau^2(S): S \in F(\Lambda[{\bf z}], D)\},
\end{multline}
which establishes the lower semicontinuity of the functional \eqref{wlsc} and so the existence of the solution.
\end{proof}

\section{Some simple experiments}

Finally, we tried to get some hint and confirmation reproducing our problem in the laboratory. The film was a solution of 81\% water, 16\% glycerine, 3\% of common dish soap and we added a spoon of baking powder to make it more resistant.

In the first example we took two fixed linked rigid metallic wires in the configuration of fig.~\ref{sistema}, and we observed first a locally minimal configuration consisting of a plane surface and a $D$-spanning set in the sense of our definition. Once the extra surface was removed, the remaining surface seemed to be the minimum surface (fig.~\ref{figure}).

\begin{figure}
\centering
\subfloat[][]
   {\includegraphics[width=.45\textwidth]{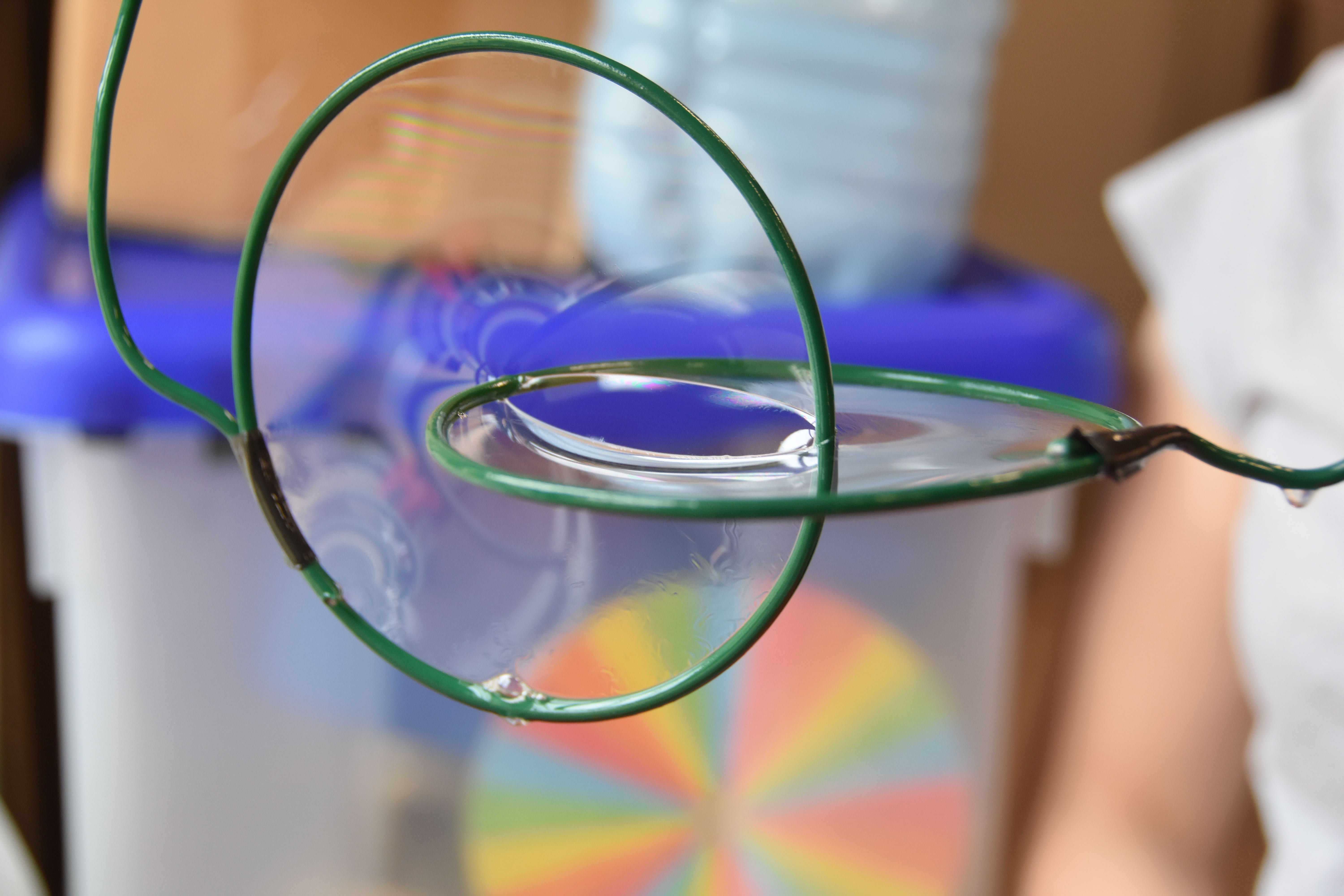}} \quad
\subfloat[][]
   {\includegraphics[width=.45\textwidth]{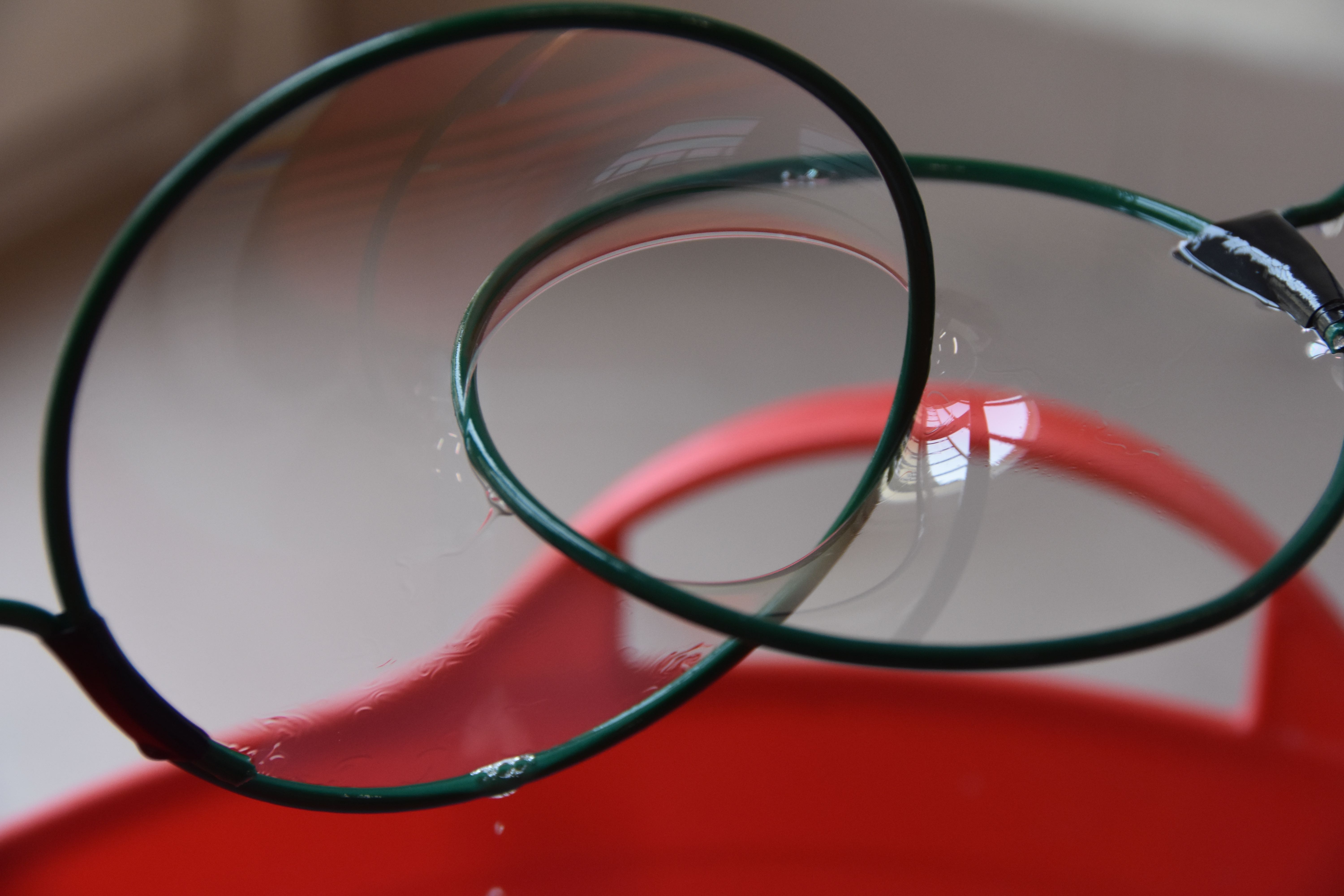}} \\
%\subfloat[][]
  % {\includegraphics[width=.45\textwidth]{2}} \quad
%\subfloat[][]
 %  {\includegraphics[width=.45\textwidth]{3}}
\caption{Results obtained by our configuration with two fixed linked rigid metallic wires.}
%Precisely, the minimal surface for this configuration is the one in which we break the \emph{plane} in the hole between the two closed rods, Fig (c).
\label{figure}
\end{figure}

In the second experiment, for practical reasons, we took a fixed rigid metallic wire as first rod, while the second was a 0.5 mm thick slender fishing line, twisted of $4$ turns and then glued together.  The pictures (fig.~\ref{sequenza}) show the existence of a configuration balancing the weight of the line with the energy coming from the film.
\begin{figure}
\centering
\subfloat[][]
   {\includegraphics[width=.45\textwidth]{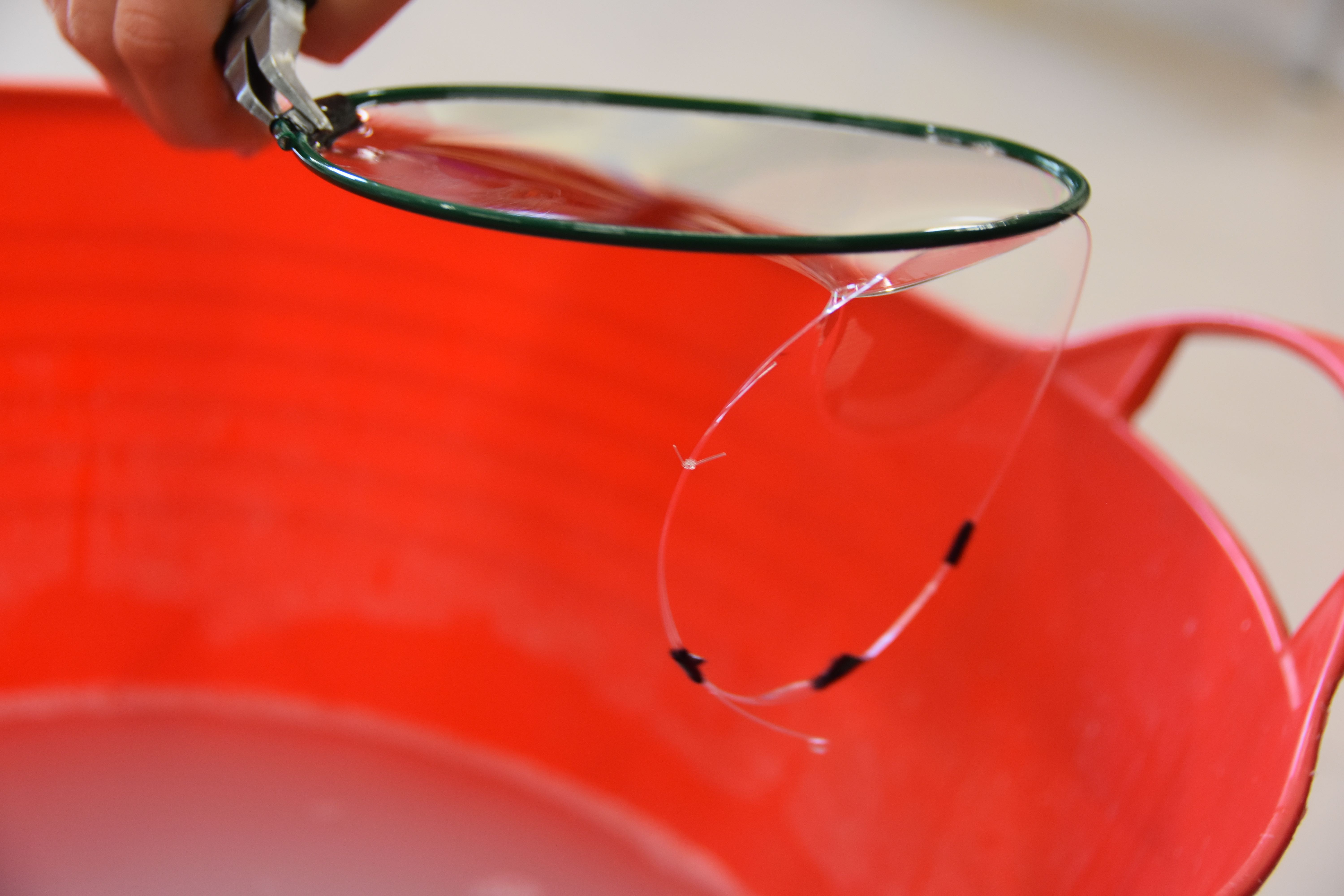}} \quad
\subfloat[][]
   {\includegraphics[width=.45\textwidth]{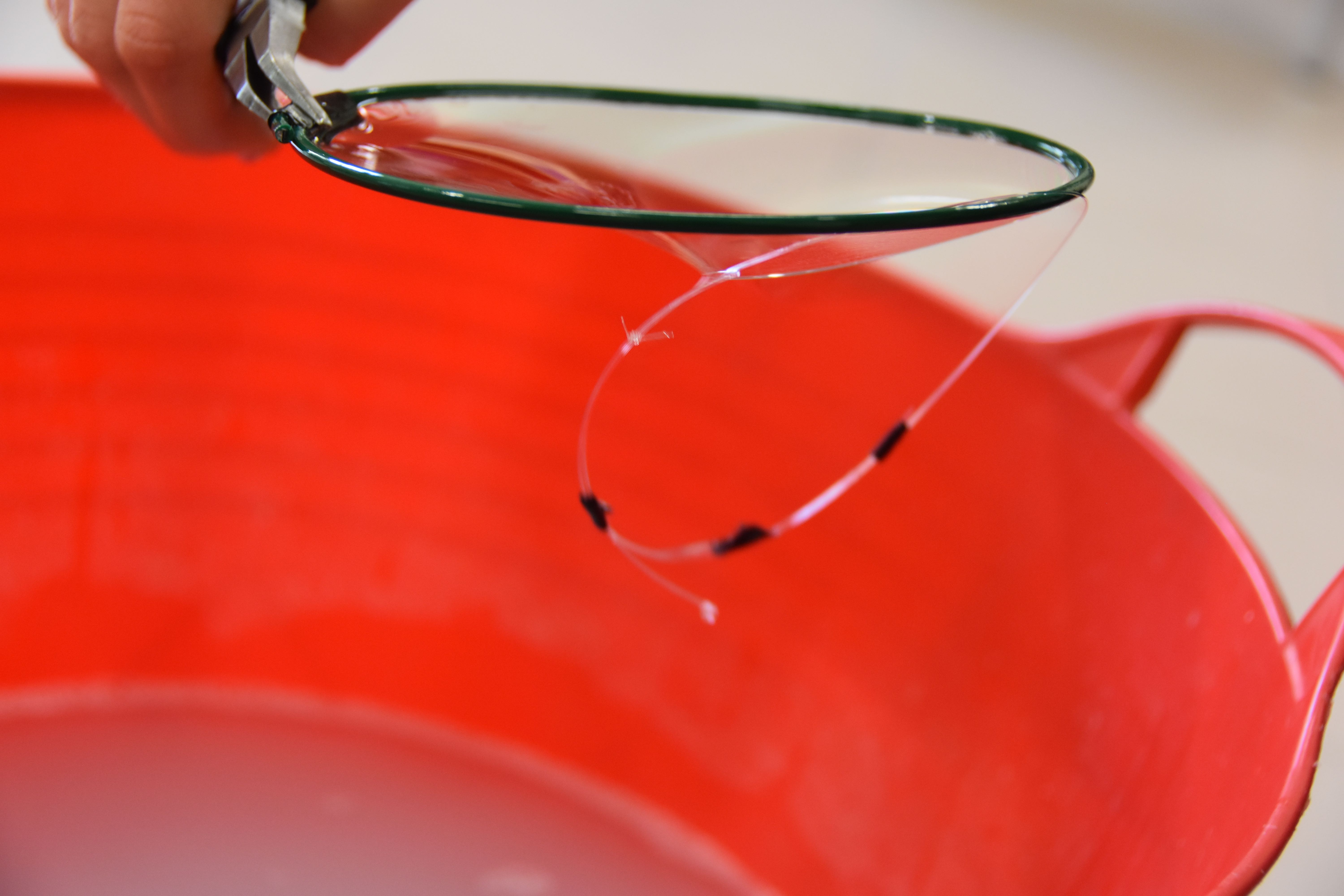}} \\
   %\subfloat[][]
  % {\includegraphics[width=.45\textwidth]{2}} \quad
%\subfloat[][]
 %  {\includegraphics[width=.45\textwidth]{3}}
\caption{Results obtained by our configuration with the mobile component.} 
%Precisely, we use a slender fishing line, thickness $0.5$ mm.
\label{sequenza}
\end{figure}

\section{Acknowledgements}
The authors wish to thank Marco Degiovanni and Giulio Giuseppe Giusteri for helpful suggestions and fruitful discussions. The work has been partially supported by INdAM–GNFM (Bevilacqua, Marzocchi) and INdAM–GNAMPA (Lussardi).

\end{document}